\documentclass[11pt]{article}

\usepackage{amsmath,amssymb,amsthm,amsfonts,mathtools}
\usepackage{bm}
\usepackage{bbm}
\usepackage{graphicx}
\usepackage{booktabs}
\usepackage{enumitem}
\usepackage{hyperref}
\usepackage{natbib}
\usepackage{color}
\usepackage{algorithm}
\usepackage{algpseudocode}
\newenvironment{practitionerSummary}{%
	\begin{abstract}%
	}{%
	\end{abstract}%
}
\usepackage{subcaption}
\newtheorem{theorem}{Theorem}[section]
\newtheorem{proposition}[theorem]{Proposition}

\theoremstyle{definition}

\theoremstyle{remark}

\newcommand{\R}{\mathbb{R}}
\newcommand{\E}{\mathbb{E}}
\newcommand{\Prob}{\mathbb{P}}
\newcommand{\1}{\mathbbm{1}}

\newcommand{\Var}{\mathrm{Var}}

\newcommand{\argmin}{\mathop{\mathrm{arg\,min}}}
\newcommand{\argmax}{\mathop{\mathrm{arg\,max}}}

\begin{document}

	\title{Entropy Regularization under No-Shorting: Structure Preservation and Convexity in Dynamic Mean–Variance Portfolio Selection}
	\author{
		XIANGYU CUI\\
		{\small School of Statistics and Data Science, Dishui Lake Advanced Finance Institute,}\\
		{\small Shanghai University of Finance and Economics, Shanghai, China. Email: cui.xiangyu@mail.shufe.edu.cn.}\\
		XUN LI\\
		{\small Department of Applied Mathematics,}\\
		{\small The Hong Kong Polytechnic University, Hong Kong, China. Email: li.xun@polyu.edu.hk.}\\
		ZIYAN LI\\
		{\small School of Statistics and Data Science,}\\
		{\small Shanghai University of Finance and Economics, Shanghai, China. Email: liziyan1204@qq.com}\\		
		YUN SHI\\
		{\small School of Statistics, }\\
		{\small East China Normal University, Shanghai, China. Email: yshi@fem.ecnu.edu.cn.}\\
		}
	\date{}
\maketitle

\begin{abstract}
This paper studies entropy regularization for the discrete-time dynamic mean--variance portfolio problem with no-shorting constraints. In the unconstrained setting, entropy regularization can be introduced without changing the basic quadratic structure of the dynamic program. Under no-shorting, this property is lost: a standard fixed-temperature formulation generally preserves convexity but no longer preserves the homogeneous Bellman structure and the associated piecewise-linear feedback form. We introduce a state-dependent temperature that is matched to the homogeneity of the original mean--variance problem. This restores the piecewise-quadratic value function and the finite-dimensional Bellman recursion. The resulting formulation, however, makes convexity depend on the continuation coefficients generated by the soft recursion. We show that negative continuation coefficients lead to loss of convexity and prove that a common sufficiently small temperature preserves the Bellman structure and convexity over the entire finite horizon. The original no-shorting solution is recovered as the temperature vanishes. We also develop a sample-based implementation for unknown return distributions. Numerical and empirical results illustrate both the structural advantage of the state-dependent temperature and the practical relevance of the convexity-preserving temperature restriction.
\end{abstract}

\begin{practitionerSummary}
This study develops a multi-period portfolio method for investors who face no-shorting constraints and wish to use entropy regularization without losing the tractable structure of the original mean--variance problem. A standard fixed temperature can preserve convexity but generally destroys the simple Bellman recursion under no-shorting. The proposed state-dependent temperature retains the piecewise-linear feedback form and provides a clear condition for when the backward recursion remains convex. This condition is practically relevant because an excessively large temperature can invalidate the recursion even when the optimization problem is otherwise well defined. The method can be implemented directly from historical return data and produces a deterministic portfolio rule through the mode of the resulting policy distribution. The framework is therefore useful for constrained dynamic asset allocation when both computational tractability and control of regularization strength are important.
\end{practitionerSummary}

\section{Introduction}
\label{sec:introduction}
Mean--variance portfolio selection originates from the classical work of
\citet{markowitz1952portfolio}. Extending the static formulation to a
multi-period setting is nontrivial because terminal variance is not
additively separable over time. A standard approach embeds the problem
into a family of auxiliary quadratic control problems, as in the
continuous-time stochastic linear--quadratic formulation of
\citet{zhou2000continuous} and the discrete-time embedding of
\citet{li2000optimal}. The latter restores dynamic programming by
introducing a multiplier for the terminal mean constraint.

Portfolio constraints substantially alter this structure. Under
no-shorting, risky positions are restricted to the nonnegative orthant,
so the unconstrained linear feedback rule and quadratic recursion no
longer apply directly. In continuous time, \citet{li2002dynamic}
characterize the mean--variance solution under no-shorting, while
\citet{czichowsky2013cone} study Markowitz problems under general cone
constraints. In discrete time, \citet{cui2014optimal} show that the value
function becomes piecewise quadratic and the optimal policy piecewise
linear; related cone-constrained formulations are developed in
\citet{cui2017mean}. The tractability of the discrete-time no-shorting
problem rests on a homogeneous Bellman structure: after the embedding
transformation, wealth enters through a quadratic scale factor, while the
remaining optimization reduces to a finite-dimensional backward
recursion.

Entropy regularization replaces a point-valued control by a
distributional decision rule and is widely used for exploration in
stochastic control and reinforcement learning.
\citet{wang2020stochasticcontrol} formulate continuous-time exploratory
control through entropy-regularized relaxed policies and show that the
linear--quadratic case admits Gaussian exploration.
\citet{tang2022exploratory} establish well-posedness and
zero-exploration convergence for the associated exploratory HJB
equations, while \citet{szpruch2024optimal} study the scheduling of the
entropy regularizer in continuous-time linear--quadratic reinforcement
learning. Mean--variance portfolio selection provides a natural financial setting
for this approach. \citet{wang2020mv} develop an entropy-regularized
continuous-time mean--variance formulation in which the exploratory
policy remains Gaussian. Related extensions include regime-switching
markets with partial information \citep{wu2024regime}, while
\citet{shi2026noshorting} develop a discrete-time entropy-regularized
mean--variance formulation. Entropy regularization has also been applied
to equilibrium mean--variance policies \citep{dai2023learning} and
models with jumps \citep{bender2023entropy}. These studies exploit the
tractability of quadratic portfolio problems, but randomized policies
become more delicate when portfolio constraints bind.

A recent continuous-time study by \citet{li2026noshorting} illustrates
this difficulty. Enforcing no-shorting on the support of a randomized
Gaussian policy destroys the unconstrained Gaussian exploration
structure. Their approach therefore uses an auxiliary exploratory
problem in which the mean of the Gaussian policy satisfies the
no-shorting restriction, while individual realizations need not. In
contrast, we require every realization of the randomized portfolio to
satisfy the constraint by supporting the policy distribution on the
nonnegative orthant. The resulting question is whether entropy
regularization can be introduced without destroying the Bellman
structure of the original constrained problem.

A standard fixed-temperature entropy regularization of the embedded
auxiliary problem does not, in general, preserve this structure. In the
unconstrained quadratic problem, integration over the full action space
retains Gaussian closure and hence a quadratic value function. Under
no-shorting, however, the soft Bellman integral is taken over the
nonnegative orthant, making its normalization state dependent. The value
function therefore generally leaves the homogeneous piecewise-quadratic
family, so the backward recursion becomes function valued rather than
being summarized by a finite number of continuation coefficients.

We restore the finite-dimensional recursion by matching the entropy
scale to the degree-two homogeneity of the embedded auxiliary problem.
On the positive-state branch, the risky allocation is a wealth-gap scale
times a state-independent portfolio coefficient. We randomize this
coefficient over the nonnegative orthant and set the entropy temperature
proportional to the squared wealth gap. This preserves the
piecewise-quadratic value function and keeps the relative strength of
regularization constant across positive states. The resulting soft
recursion, however, introduces entropy into the continuation coefficient
that determines curvature. If this coefficient becomes negative,
convexity is lost and the failure propagates backward. We show that
there exists a common positive temperature bound under which all
continuation coefficients remain positive and the entire finite-horizon
recursion remains well posed and convex. We also provide a directly
checkable sufficient condition and establish zero-temperature recovery
of the continuation coefficients, mode policy, and embedding multiplier.
These results identify both the structure-preserving entropy scaling and
the recursive convexity condition required for its validity.

For unknown return distributions, the proposed recursion admits a direct
sample-based implementation. Numerical experiments and a rolling application
to Fama--French industry portfolios show that the positivity condition defines
a nontrivial but finite admissible temperature region.

The remainder of the paper is organized as follows. 
Section~\ref{sec:structural-challenge} reviews the deterministic no-shorting recursion and shows why a fixed entropy temperature loses the original Bellman closure. 
Section~\ref{sec:structure-preserving-soft} develops the structure-preserving entropy formulation and establishes the whole-horizon convexity result. 
Section~\ref{sec:sample-implementation} presents the sample-based implementation. 
Section~\ref{sec:numerical-analysis} reports the numerical analysis, and Section~\ref{sec:empirical-illustration} presents the rolling empirical application. 
The final section concludes.

\section{Fixed-Temperature Entropy and the Loss of Bellman Structure}
\label{sec:structural-challenge}

The tractability of the discrete-time no-shorting mean--variance problem relies on a particular homogeneous Bellman structure. This section first recalls that structure and then shows why a standard fixed-temperature entropy regularization fails to preserve it under the no-shorting constraint.

\subsection{The Baseline No-Shorting Recursion}
\label{subsec:original-structure}

Consider a discrete-time market with one riskless asset and $n$ risky assets over periods
$t=0,1,\ldots,T-1$. Let $s_t>0$ denote the gross return of the riskless asset
and $P_t\in\R^n$ the vector of risky excess returns. The random vectors $P_0,\ldots,P_{T-1}$ are assumed to be mutually independent, although they need not be identically distributed. Let $x_t$ denote the investor's wealth and
$u_t\in\R_+^n$ the vector of risky-asset positions, where $\R_+^n$ denotes
the nonnegative orthant of $\R^n$. The decision $u_t$ is made using the
information available at date $t$, before $P_t$ is realized. Wealth evolves
according to
\[
    x_{t+1}=s_tx_t+P_t^\top u_t.
\]
where $^\top$ denotes the transpose of a vector or a matrix. 

Throughout the paper, we assume that, for every
$t=0,1,\ldots,T-1$,
\[
    \E[P_t]>0
    \quad\text{componentwise},
    \qquad
    \E[P_tP_t^\top]\succ0,
    \qquad
    \Var(P_t)\succ0.
\]
The positive-mean condition is the standard positive-risk-premium assumption
used in the no-shorting formulation of \citet{cui2014optimal}. The
pre-commitment multi-period mean--variance problem is
\begin{equation}
\begin{aligned}
    \min_{\{u_t\}_{t=0}^{T-1}}
    \quad & \Var_{0,x_0}(x_T) \\
    \text{s.t.}\quad
    & \E_{0,x_0}[x_T]=d,\\
    & x_{t+1}=s_tx_t+P_t^\top u_t,\\
    & u_t\in\R_+^n,
    \qquad t=0,1,\ldots,T-1.
\end{aligned}
\label{eq:original-mv}
\end{equation}
Here $\E_{0,x_0}[\cdot]$ and $\Var_{0,x_0}(\cdot)$ denote expectation
and variance, respectively, conditional on the initial state
$x_0$ at date $0$, with the randomness generated by future market returns.
Unless explicitly conditioned on a state, expectations involving only
$P_t$, such as $\E[P_t]$ and $\E[P_tP_t^\top]$, and $\Var(P_t)$ are
unconditional.

The usual embedding converts the terminal-mean constraint in
\eqref{eq:original-mv} into a family of quadratic control problems indexed
by a multiplier $\lambda$. Specifically, after introducing the multiplier
for the constraint $\E_{0,x_0}[x_T]=d$ and completing the square, the
Lagrangian can be written, up to the additive term
$-\frac12\lambda^2$ that is independent of the control, as
\[
    \E_{0,x_0}\!\left[
        \frac12\bigl(x_T-(d-\lambda)\bigr)^2
    \right].
\]
Hence, for each fixed $\lambda$, the corresponding auxiliary problem is
\begin{equation}
\begin{aligned}
    \min_{\{u_t\}_{t=0}^{T-1}}
    \quad &
    \E_{0,y_0}\!\left[
        \frac12y_T^2
    \right] \\
    \text{s.t.}\quad
    &y_{t+1}=s_ty_t-P_t^\top u_t,\\
    &u_t\in\R_+^n,
\end{aligned}
\label{eq:auxiliary-mv}
\end{equation}
where
\[
    y_t:=(d-\lambda)q_t^{-1}-x_t,
    \qquad
    q_t:=\prod_{\ell=t}^{T-1}s_\ell,
    \qquad
    q_T=1.
\]
The original constrained problem is recovered by subsequently optimizing
over $\lambda$, with the omitted quadratic term in the multiplier restored.
This separation between the inner dynamic control problem and the outer
optimization over $\lambda$ is also the basis of the regularized formulation
developed below. In particular, all entropy regularization in this paper is
introduced into the auxiliary problem \eqref{eq:auxiliary-mv}, while the
outer optimization over $\lambda$ is retained and solved in
Section~\ref{subsec:optimal-policy-embedding}.

The auxiliary problem has terminal value
\[
    J_T(y_T)=\frac12y_T^2
\]
and Bellman recursion
\[
    J_t(y_t)
    =
    \inf_{u\in\R_+^n}
    \E_{t,y_t}\!\left[
        J_{t+1}\bigl(s_ty_t-P_t^\top u\bigr)
    \right].
\]
Here and below, $\E_{t,y}[\cdot]$ denotes conditional expectation given
$y_t$ at date $t$.  The no-shorting problem admits a finite-dimensional recursion
(\citealp{cui2014optimal}). Suppose that, at date $t+1$,
\[
    J_{t+1}(y)
    =
    \frac12 C_{t+1}q_{t+1}^2y^2\1_{\{y>0\}}
    +
    \frac12 q_{t+1}^2y^2\1_{\{y\le0\}},
\]
where $\1_{\{\cdot\}}$ is the indicator function. 
For $y_t>0$, write
\[
    u_t=s_ty_tK,
    \qquad K\in\R_+^n.
\]
Then
\[
    y_{t+1}=s_ty_t(1-P_t^\top K),
\]
and, using $q_t=s_tq_{t+1}$,
\[
    \E_{t,y_t}[J_{t+1}(y_{t+1})]
    =
    q_t^2y_t^2 h_t(K;C_{t+1}),
\]
where the stage function is 
\begin{equation}
    h_t(K;c)
    :=
    \frac12
    \E\!\left[
        c(1-P_t^\top K)^2\1_{\{P_t^\top K<1\}}
        +
        (1-P_t^\top K)^2\1_{\{P_t^\top K\ge1\}}
    \right],
    \qquad
    K\in\R_+^n.
    \label{eq:stage-energy-new}
\end{equation}
Hence, for $y_t>0$,
\[
    J_t(y_t)
    =
    q_t^2y_t^2
    \min_{K\in\R_+^n}
    h_t(K;C_{t+1}).
\]
For $y_t\le0$, setting $u_t=0$ gives $y_{t+1}=s_ty_t\le0$. Moreover,
the gradient of the Bellman objective at $u_t=0$ is
\[
    \left.
    \nabla_{u_t}
    \E_{t,y_t}\!\left[
        J_{t+1}(s_ty_t-P_t^\top u_t)
    \right]
    \right|_{u_t=0}
    =
    -q_{t+1}^2s_ty_t\,\E[P_t]
    \ge0
\]
componentwise, where the inequality follows from $y_t\le0$ and
$\E[P_t]>0$. Since the Bellman objective is convex over
$u_t\in\R_+^n$, the first-order optimality condition is satisfied at
$u_t=0$; hence the optimal risky position is zero. Therefore,
\[
    J_t(y_t)=\frac12q_t^2y_t^2.
\]
Therefore, starting from $C_T=1$, backward induction gives
\begin{equation}
    J_t(y)
    =
    \frac12 C_tq_t^2y^2\1_{\{y>0\}}
    +
    \frac12 q_t^2y^2\1_{\{y\le0\}},
    \label{eq:original-piecewise-value-new}
\end{equation}
where
\begin{equation}
    K_t^*
    =
    \argmin_{K\in\R_+^n}h_t(K;C_{t+1}),
    \qquad
    C_t
    =
    2\min_{K\in\R_+^n}h_t(K;C_{t+1}),
    \qquad
    C_T=1.
    \label{eq:original-recursion-new}
\end{equation}
The optimal control of the auxiliary problem is
\[
    u_t^*(y_t)
    =
    s_ty_tK_t^*\1_{\{y_t>0\}}.
\]
Thus the auxiliary value function is piecewise quadratic, while the
backward recursion is reduced to the scalar coefficient $C_t$ and the
vector $K_t^*$ at each date.

\subsection{Fixed Temperature: Convexity without Bellman Closure}
\label{subsec:fixed-temperature-challenge}

We first regularize the fixed-$\lambda$ auxiliary problem
\eqref{eq:auxiliary-mv} using a fixed absolute entropy temperature
$\bar\tau>0$. Conditional on the transformed state $y_t$, the risky
allocation $U_t$ is randomized according to a probability density
$\rho_t(\cdot\mid y_t)$ with respect to Lebesgue measure, supported on
$\R_+^n$. Thus
\[
    \rho_t(u\mid y_t)\ge0,
    \qquad
    \int_{\R_+^n}\rho_t(u\mid y_t)\,du=1,
\]
and every realization of $U_t$ satisfies the no-shorting constraint.
The transformed state evolves according to
\[
    y_{t+1}=s_ty_t-P_t^\top U_t.
\]

For a given initial transformed state
\(
    y_0,
\)
the fixed-temperature entropy-regularized auxiliary problem is
\begin{equation}
\begin{aligned}
    \min_{\{\rho_t\}_{t=0}^{T-1}}\quad &
    \E_{0,y_0}\!\left[
        \frac12y_T^2
        +
        \bar\tau
        \sum_{t=0}^{T-1}
         \int_{\R_+^n}
    \rho_t(u\mid y_t) \log\rho_t(u\mid y_t) du
    \right],\\
    \text{s.t.}\quad & y_{t+1} = s_t y_t- P_t^\top u_t,\\
    & u_t \sim \rho_t(u\mid y_t),\quad     \rho_t(u\mid y_t)\ge0,
    \quad
    \int_{\R_+^n}\rho_t(u\mid y_t)\,du=1,
    \end{aligned}
    \label{eq:fixed-temperature-objective}
\end{equation}
where the infimum is taken over admissible Markov densities supported
on $\R_+^n$ for which the entropy terms are well defined, and the
expectation is taken over both market and policy randomness. 

For an arbitrary date $t$ and state $y_t$, define the continuation value
\[
    \widetilde J_t^{\bar\tau}(y_t)
    :=
    \min_{\{\rho_s\}_{s=t}^{T-1}}
    \E_{t,y_t}\!\left[
        \frac12y_T^2
        +
        \bar\tau
        \sum_{s=t}^{T-1}
           \int_{\R_+^n}
    \rho_t(u\mid y_t) \log\rho_t(u\mid y_t) du
    \right],
\]
with terminal condition
\[
    \widetilde J_T^{\bar\tau}(y_T)
    =
    \frac12y_T^2.
\]
The Bellman principle therefore gives
\begin{equation}
\begin{aligned}
    \widetilde J_t^{\bar\tau}(y_t)
    =
    \min_{\rho_t}
    \bigg\{
        &\int_{\R_+^n}
        \E_{t,y_t}\!\left[
            \widetilde J_{t+1}^{\bar\tau}
            (s_ty_t-P_t^\top u)
        \right]
        \rho_t(u\mid y)\,du+
        \bar\tau
        \int_{\R_+^n}
        \rho_t(u\mid y)\log\rho_t(u\mid y)\,du
    \bigg\},
\end{aligned}
\label{eq:fixed-temperature-bellman}
\end{equation}
where the infimum is over probability densities on $\R_+^n$ for which
the objective is finite. Whenever the normalizing integral is finite, the entropy variational
identity yields the unique optimal density
\begin{equation}
    \widetilde\rho_t^{\bar\tau}(u\mid y_t)
    =
    \frac{
        \exp\!\left\{
            -\frac{
                \E_{t,y_t}\!\left[
                    \widetilde J_{t+1}^{\bar\tau}
                    (s_ty_t-P_t^\top u)
                \right]
            }{\bar\tau}
        \right\}
    }{
        \int_{\R_+^n}
        \exp\!\left\{
            -\frac{
                \E_{t,y_t}\!\left[
                    \widetilde J_{t+1}^{\bar\tau}
                    (s_ty_t-P_t^\top \nu)
                \right]
            }{\bar\tau}
        \right\}d\nu
    } ,
    \label{eq:fixed-temperature-gibbs}
\end{equation}
and the corresponding soft Bellman value is
\begin{equation}
    \widetilde J_t^{\bar\tau}(y_t)
    =
    -\bar\tau
    \log
    \int_{\R_+^n}
    \exp\!\left\{
        -\frac{
            \E_{t,y_t}\!\left[
                \widetilde J_{t+1}^{\bar\tau}
                (s_ty_t-P_t^\top u)
            \right]
        }{\bar\tau}
    \right\}du .
    \label{eq:fixed-temperature-softmin}
\end{equation}

The fixed-temperature formulation therefore defines a well-posed soft
Bellman recursion whenever the corresponding integrals are finite. The
following result shows that regularizing the auxiliary problem in this
way preserves convexity but generally destroys the piecewise-quadratic
Bellman structure inherited from the original no-shorting problem.

\begin{proposition}
\label{prop:fixed-temperature-benchmark-new}
Suppose that the integrals in
\eqref{eq:fixed-temperature-softmin} are finite.
Then:

\begin{enumerate}
    \item
    $\widetilde J_t^{\bar\tau}(y_t)$ is convex in the state $y_t$ for every
    $t=0,1,\ldots,T$.

    \item
    The two-branch piecewise-quadratic value-function structure of the original
no-shorting auxiliary problem is not, in general, preserved under
fixed-temperature entropy regularization. In particular, if
    \[
        \bigl(\E[P_{T-1}P_{T-1}^{\top}]\bigr)^{-1}
        \E[P_{T-1}]
        \in\R_+^n\setminus\{0\},
    \]
    then $\widetilde J_{T-1}^{\bar\tau}(y_{T-1})$ is not quadratic on the region of $y_{T-1}>0$.
\end{enumerate}
\end{proposition}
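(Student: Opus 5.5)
Part 1 goes by backward induction on $t$. The base case holds because $\widetilde J_T^{\bar\tau}(y)=\frac12y^2$ is convex. Assume $\widetilde J_{t+1}^{\bar\tau}$ is convex. The map $(y,u)\mapsto s_ty-P_t^\top u$ is affine for each realization of $P_t$, so $G_t(y,u):=\E\bigl[\widetilde J_{t+1}^{\bar\tau}(s_ty-P_t^\top u)\bigr]$ is jointly convex in $(y,u)$ on $\R\times\R_+^n$, since expectation preserves convexity. By \eqref{eq:fixed-temperature-softmin},
\[
\widetilde J_t^{\bar\tau}(y)=-\bar\tau\log\int_{\R_+^n}e^{-G_t(y,u)/\bar\tau}\,du .
\]
I would prove convexity of this soft-min directly with Hölder's inequality, which is the Prékopa--Leindler argument in one line. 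Fix $y_0,y_1$, $\theta\in(0,1)$ and $y_\theta=(1-\theta)y_0+\theta y_1$. For $u_0,u_1\in\R_+^n$, joint convexity and convexity of $\R_+^n$ give $G_t(y_\theta,(1-\theta)u_0+\theta u_1)\le(1-\theta)G_t(y_0,u_0)+\theta G_t(y_1,u_1)$. Prékopa--Leindler applied to $f=e^{-G_t(y_\theta,\cdot)/\bar\tau}$, $g_i=e^{-G_t(y_i,\cdot)/\bar\tau}$ (extended by $0$ off $\R_+^n$) then yields $\int f\ge(\int g_0)^{1-\theta}(\int g_1)^{\theta}$. Taking $-\bar\tau\log$ gives the convexity inequality. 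The finiteness hypothesis ensures every quantity is finite, and positivity of the integrands ensures the logarithms are finite.

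For Part 2, first compute $\widetilde J_{T-1}^{\bar\tau}$ explicitly. Here $G_{T-1}(y,u)=\frac12\E[(s y-P^\top u)^2]=\frac12 s^2y^2-sy\,\mu^\top u+\frac12u^\top\Sigma u$, where $P=P_{T-1}$, $s=s_{T-1}$, $\mu=\E[P]$ and $\Sigma=\E[PP^\top]\succ0$. Completing the square around $m(y)=sy\,\Sigma^{-1}\mu$ gives
\[
\widetilde J_{T-1}^{\bar\tau}(y)=\tfrac12 s^2y^2\bigl(1-\mu^\top\Sigma^{-1}\mu\bigr)-\bar\tau\log Z(y),\qquad Z(y):=\int_{\R_+^n}e^{-(u-m(y))^\top\Sigma(u-m(y))/(2\bar\tau)}du .
\]
So $Z(y)$ is a Gaussian orthant probability (times a constant) for a Gaussian with mean $m(y)=y\,s\Sigma^{-1}\mu$. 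Suppose, for contradiction, that $\widetilde J_{T-1}^{\bar\tau}$ is quadratic on $y>0$. Then $\log Z(y)$ equals a quadratic $a+by+cy^2$ there. I will contradict this through asymptotics. By hypothesis $v:=s\Sigma^{-1}\mu\in\R_+^n\setminus\{0\}$, so $m(y)$ moves into $\R_+^n$ as $y\to\infty$.

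Then $Z(y)=\int_{\R_+^n-yv}e^{-w^\top\Sigma w/(2\bar\tau)}dw$. The domains $\R_+^n-yv$ increase monotonically in $y$ because $v\ge0$. Hence $Z$ is nondecreasing and bounded above by the full Gaussian integral $(2\pi\bar\tau)^{n/2}\det\Sigma^{-1/2}$. Since $Z$ is also positive, $\log Z$ is bounded and monotone on $(0,\infty)$. A quadratic with these properties must be constant, so $b=c=0$. Constancy forces the translated orthants to carry the same Gaussian mass for all $y>0$. The sets $\R_+^n-yv$ strictly increase in $y$ by a set of positive Lebesgue measure: some coordinate $v_i>0$, so points with $w_i\in(-y_2v_i,-y_1v_i)$ lie in one set but not the other. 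The Gaussian density is strictly positive, so $Z$ is strictly increasing, which is a contradiction.

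The main obstacle is this last step. It requires checking that the orthant measure is strictly increasing rather than merely nondecreasing, and that the real-analytic dependence on $y$ causes no subtlety. The monotone-domain argument settles both points without any explicit evaluation of Gaussian orthant probabilities. I also use the positive-mean assumption only through the hypothesis $v\in\R_+^n\setminus\{0\}$.
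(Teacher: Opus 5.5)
Your proposal is correct and follows the paper's proof essentially step for step. Part~1 is the same backward induction: joint convexity of the continuation term, then preservation of log-concavity under integration over $\R_+^n$, which you make explicit via Pr\'ekopa--Leindler. Part~2 is the same reduction, by completing the square, to a Gaussian orthant mass that is bounded and strictly increasing in $y$, and hence cannot make $\widetilde J_{T-1}^{\bar\tau}$ quadratic on $y>0$.

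One slip in wording: H\"older's inequality runs in the wrong direction for this convexity claim, so the step rests on Pr\'ekopa--Leindler, which is what you actually apply, correctly.
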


\begin{proof}
We first prove convexity by backward induction. At the terminal date,
\[
    \widetilde J_T^{\bar\tau}(y_T)=\frac12y_T^2
\]
is convex. Suppose that
$\widetilde J_{t+1}^{\bar\tau}(y_{t+1})$ is convex. Since
\[
    (y_t,u)\longmapsto s_ty_t-P_t^\top u
\]
is affine,
\[
    (y_t,u)
    \longmapsto
    \E_{t,y_t}\!\left[
        \widetilde J_{t+1}^{\bar\tau}
        (s_ty_t-P_t^\top u)
    \right]
\]
is jointly convex. Therefore
\[
    (y_t,u)
    \longmapsto
    \exp\!\left\{
        -\frac{
        \E_{t,y_t}\!\left[
            \widetilde J_{t+1}^{\bar\tau}
            (s_ty_t-P_t^\top u)
        \right]
        }{\bar\tau}
    \right\}
\]
is jointly log-concave. Since $\R_+^n$ is convex, integration with respect to $u$ preserves log-concavity. It follows from
\eqref{eq:fixed-temperature-softmin} that
$\widetilde J_t^{\bar\tau}(y_t)$ is convex. This proves the first statement by induction.

To establish the second statement, consider the final decision date $T-1$. We have 
\[
    \widetilde J_{T-1}^{\bar\tau}(y_{T-1})
    =
    -\bar\tau
    \log
    \int_{\R_+^n}
    \exp\!\left\{
        -\frac{
        \E_{T-1,y_{T-1}}\!\left[
        (s_{T-1}y_{T-1}-P_{T-1}^{\top}u)^2
        \right]
        }{2\bar\tau}
    \right\}du.
\]
Expanding the quadratic term of $u$,
\begin{align*}
    &\E_{T-1,y_{T-1}}\!\left[
        (s_{T-1}y_{T-1}-P_{T-1}^{\top}u)^2
    \right]
    \\
    =&
    u^\top \E[P_{T-1}P_{T-1}^\top]u
    -2s_{T-1}y_{T-1}\E[P_{T-1}]^\top u
    +s_{T-1}^2y_{T-1}^2
    \\
    =&
    \bigl(
        u
        -
        s_{T-1}y_{T-1}
        \bigl(\E[P_{T-1}P_{T-1}^\top]\bigr)^{-1}
        \E[P_{T-1}]
    \bigr)^\top
    \E[P_{T-1}P_{T-1}^\top]
    \bigl(
        u
        -
        s_{T-1}y_{T-1}
        \bigl(\E[P_{T-1}P_{T-1}^\top]\bigr)^{-1}
        \E[P_{T-1}]
    \bigr)
    \\
    &
    +
    s_{T-1}^2y_{T-1}^2
    \left[
        1
        -
        \E[P_{T-1}]^\top
        \bigl(\E[P_{T-1}P_{T-1}^\top]\bigr)^{-1}
        \E[P_{T-1}]
    \right].
\end{align*}
Substituting this expression into the integral gives
\begin{align*}
    \widetilde J_{T-1}^{\bar\tau}(y_{T-1})
    ={}&
    \frac12s_{T-1}^2y_{T-1}^2
    \left[
        1
        -
        \E[P_{T-1}]^\top
        \bigl(\E[P_{T-1}P_{T-1}^\top]\bigr)^{-1}
        \E[P_{T-1}]
    \right] -\frac{n\bar\tau}{2}\log(2\pi\bar\tau) \\
    & 
    +\frac{\bar\tau}{2}
    \log\det\left(\E[P_{T-1}P_{T-1}^\top]\right)
 -\bar\tau
    \log
    \Prob\!\left\{
       Z_{T-1}
        \in\R_+^n
    \right\},
\end{align*}
where $\det(\cdot)$ denotes the determinant of a matrix,
$\Prob\{\cdot\}$ denotes the probability of an event, and
$Z_{T-1}\sim N\!\left(
s_{T-1}y_{T-1}
(\E[P_{T-1}P_{T-1}^\top])^{-1}\E[P_{T-1}],
\bar\tau(\E[P_{T-1}P_{T-1}^\top])^{-1}
\right)$.
The first three terms are quadratic in $y_{T-1}$ or independent of $y_{T-1}$. Thus the only term that can destroy quadraticity is the logarithm of the Gaussian orthant probability.

Under the assumption of 
\[
    \bigl(\E[P_{T-1}P_{T-1}^{\top}]\bigr)^{-1}
    \E[P_{T-1}]
    \in\R_+^n\setminus\{0\},
\]
the mean of the Gaussian distribution moves componentwise in a nonnegative
direction as $y_{T-1}$ increases on the region of $y_{T-1}>0$, and strictly in at least one component,
while its covariance matrix remains fixed and positive definite. Hence the
probability of the nonnegative orthant is strictly increasing in
$y_{T-1}$ on $[0,\infty)$. At $y_{T-1}=0$, the Gaussian distribution is nondegenerate and has a
strictly positive density on $\R^n$. Since $\R_+^n$ has positive Lebesgue
measure,
\[
    \Prob\!\left\{
        Z
        \in\R_+^n
    \right\}>0,
\]
where $Z\sim N\!\left(
            0,
            \bar\tau
            (\E[P_{T-1}P_{T-1}^{\top}])^{-1}
        \right).$
Therefore, for all $y_{T-1}\ge0$, the orthant probability is bounded below
by a strictly positive constant and above by one. Its logarithm is thus
bounded and nonconstant on $[0,\infty)$.

A nonconstant quadratic polynomial cannot be bounded on $[0,\infty)$. Hence the logarithmic probability term is not quadratic in $y_{T-1}$, and consequently
$\widetilde J_{T-1}^{\bar\tau}(y_{T-1})$ is not quadratic on $y_{T-1}>0$. Since the original no-shorting problem requires a quadratic positive-state branch at every date, the fixed-temperature entropy regularization does not preserve that family.
\end{proof}

A fixed absolute temperature therefore preserves convexity at the cost of losing the low-dimensional recursion of the original no-shorting problem. In the original model without entropy regularization, the value function is characterized by the scalar continuation coefficient $C_t$, while the portfolio rule is determined by the coefficient $K_t^*$. The wealth state enters only through the factor $q_t^2y_t^2$. Once the piecewise-quadratic structure is lost, the continuation value must instead be treated as a general function
$\widetilde J_t^{\bar\tau}(y_t)$. Each Bellman recursion then requires evaluating this function at the random next state and performing the corresponding expectation and soft integration over the feasible control set. The backward recursion is therefore no longer a sequence of low-dimensional coefficient problems, but a function-valued dynamic program. This computational cost is particularly severe in the sample-based setting. Since the expectation is estimated from return observations, losing the scalar coefficient recursion means that sample averages must be recomputed over many state values and candidate controls at every date.

The fixed temperature also destroys the scale separation of the original feedback rule. Under no-shorting,
\[
    u_t^*
    =
    s_ty_tK_t^*\1_{\{y_t>0\}},
\]
so the wealth gap $y_t$ determines the scale of risky exposure and the state-independent vector $K_t^*$ determines its composition. With a fixed absolute temperature, the mean--variance loss scales with $y_t^2$ whereas the entropy penalty does not. The relative strength of regularization therefore varies with the wealth gap, and the optimal randomized policy cannot in general be obtained by scaling a state-independent distribution over portfolio coefficients. The piecewise-linear wealth-gap interpretation of the original policy is consequently lost.

These two limitations motivate a state-dependent temperature that preserves the homogeneity of the original mean--variance recursion.

\section{State-Dependent Entropy and Bellman Structure Preservation}
\label{sec:structure-preserving-soft}

Section~\ref{sec:structural-challenge} shows that applying a fixed absolute entropy temperature to
the embedded auxiliary problem (\ref{eq:auxiliary-mv}) preserves convexity but destroys
its homogeneous piecewise-quadratic structure. We now construct a
state-dependent entropy regularization of the same auxiliary problem
that restores this structure. The construction leads to a low-dimensional soft Bellman recursion, but convexity is no longer automatic because entropy enters the continuation coefficients. We therefore derive a whole-horizon condition that keeps the recursion in the positive-coefficient region and provide a directly checkable sufficient bound. The section then characterizes the optimal randomized policy, the corresponding mode feedback and embedding multiplier, and their zero-temperature limits.

\subsection{Structure Preservation and Whole-Horizon Convexity}
\label{subsec:state-dependent-convexity}

We continue to work with the fixed-$\lambda$ auxiliary problem
\eqref{eq:auxiliary-mv}. The solution of the auxiliary problem  has two
branches. When $y_t\le0$, the optimal risky position is zero under the
standing positive-mean condition, so no entropy regularization is needed on
this branch. When $y_t>0$, the risky allocation takes the scale-separated
form $u_t=s_ty_tK$ with $K\in\R_+^n$. We therefore introduce entropy
regularization through a probability density $\pi_t(K)$ over the coefficient $K$ on
the positive-state branch, while retaining the original inactive branch for
$y_t\le0$.

Let $\tau_t(y_t)>0$ denote the entropy coefficient on the positive-state
branch. For a given initial transformed state $y_0$, the corresponding
state-dependent entropy-regularized auxiliary problem is
\begin{equation}
\begin{aligned}
    \min_{\{\pi_t\}_{t=0}^{T-1}}\quad &
    \E_{0,y_0}\Bigg[
        \frac12y_T^2
        +
        \sum_{t=0}^{T-1}
        \tau_t(y_t)\1_{\{y_t>0\}}
        \int_{\R_+^n}
        \pi_t(K)
        \log\pi_t(K)\,dK
    \Bigg].\\
    \text{s.t.}\quad & y_{t+1}= s_ty_t-s_t y_tP_t^\top K_t \1_{\{y_t>0\}},\\
    & K_t \sim \pi_t(K),\quad \pi(K)\ge 0, \quad \int_{\R_+^n}  \pi_t(K) dK = 1,
    \label{eq:state-dependent-auxiliary-general}
\end{aligned}
\end{equation}
The expectation is taken over both market and policy randomness. The Bellman principle therefore gives
\begin{equation}
\begin{aligned}
    J_t^{\tau}(y_t)
    =
    \min_{\pi_t}
    \bigg\{
        &\int_{\R_+^n}
        \E_{t,y_t}\!\left[
            J_{t+1}^{\tau}
            (y_{t+1})
        \right]
        \pi_t(K)\,dK+
        \tau_t(y_t)\1_{\{y_t>0\}}
        \int_{\R_+^n}
        \pi_t(K)\log\pi_t(K)\,dK
    \bigg\},
\end{aligned}
\label{eq:state-dependent-temperature-bellman}
\end{equation}
where $J_{t+1}^{\tau}
            (y_{t+1})$ is the continuation value function at date $t+1$. 

To preserve the low-dimensional recursion of the original auxiliary problem 
problem, the entropy regularization must be compatible with the homogeneity
of the embedded Bellman equation. The following result characterizes the
required temperature scaling and states the conditions under which the
piecewise-quadratic structure is retained.

\begin{proposition}
\label{prop:structure-compatible-scaling}
Suppose that the continuation value at date $t+1$ has the form
\[
    J_{t+1}^\tau(y_{t+1})
    =
    \frac12 C_{t+1}^\tau q_{t+1}^2y_{t+1}^2\1_{\{y_{t+1}>0\}}
    +
    \frac12 q_{t+1}^2y_{t+1}^2\1_{\{y_{t+1}\le0\}},
\]
with $C_{t+1}^\tau>0$. Under the state-dependent temperature
\begin{equation}
    \tau_t(y_t)
    =
    q_t^2y_t^2\tau,
    \qquad \tau>0,
    \label{eq:state-dependent-temperature-new}
\end{equation}
the positive-state Bellman objective can be written as
$q_t^2y_t^2$ times a variational problem that is independent of $y_t$.

Whenever
\(
    \int_{\R_+^n}
    \exp\!\left\{
        -\frac{h_t(K;C_{t+1}^\tau)}{\tau}
    \right\}dK
    <\infty,
\)
define
\begin{equation}
    C_t^\tau
    :=
    -2\tau
    \log
    \int_{\R_+^n}
    \exp\!\left\{
        -\frac{h_t(K;C_{t+1}^\tau)}{\tau}
    \right\}dK.
    \label{eq:soft-logpartition-new}
\end{equation}
If $C_t^\tau>0$, then
\[
    J_t^\tau(y)
    =
    \frac12 C_t^\tau q_t^2y^2\1_{\{y>0\}}
    +
    \frac12 q_t^2y^2\1_{\{y\le0\}},
\]
and $J_t^\tau(y_t)$ is convex in $y_t$.
\end{proposition}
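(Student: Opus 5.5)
We treat the two branches of the Bellman equation \eqref{eq:state-dependent-temperature-bellman} separately and then verify convexity of the resulting piecewise function.

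\emph{Positive branch.} Fix $y_t>0$ and $K\in\R_+^n$. Then $y_{t+1}=s_ty_t(1-P_t^\top K)$, and $y_{t+1}>0$ if and only if $P_t^\top K<1$, because $s_ty_t>0$. Substituting the assumed form of $J_{t+1}^\tau$ and using $q_t=s_tq_{t+1}$ repeats the computation of Section~\ref{subsec:original-structure} with $C_{t+1}$ replaced by $C_{t+1}^\tau$:
\[
\E_{t,y_t}[J_{t+1}^\tau(y_{t+1})]=q_t^2y_t^2\,h_t(K;C_{t+1}^\tau).
\]
With $\tau_t(y_t)=q_t^2y_t^2\tau$, the objective in \eqref{eq:state-dependent-temperature-bellman} becomes
\[
q_t^2y_t^2\Bigl\{\int_{\R_+^n}h_t(K;C_{t+1}^\tau)\pi_t(K)\,dK+\tau\int_{\R_+^n}\pi_t\log\pi_t\,dK\Bigr\}.
\]
The bracket does not depend on $y_t$, which is the first claim.

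We then apply the Gibbs variational identity (the same one used for \eqref{eq:fixed-temperature-gibbs}): when the partition integral is finite, the bracket is minimized uniquely by $\pi_t^*\propto\exp\{-h_t/\tau\}$, and its minimal value is $-\tau\log\int\exp\{-h_t/\tau\}\,dK=\tfrac12C_t^\tau$. The one point needing care is that $h_t\ge0$ is measurable in $K$. Then $\int h_t\pi^*$ is well defined, and the Gibbs inequality (nonnegativity of the KL divergence between $\pi$ and $\pi^*$) gives the minimum over all admissible densities. Hence $J_t^\tau(y)=\tfrac12C_t^\tau q_t^2y^2$ for $y>0$.

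\emph{Nonpositive branch.} For $y_t\le0$ the dynamics in \eqref{eq:state-dependent-auxiliary-general} give $y_{t+1}=s_ty_t\le0$ deterministically, and the entropy term vanishes. Therefore
\[
J_t^\tau(y_t)=\tfrac12q_{t+1}^2s_t^2y_t^2=\tfrac12q_t^2y_t^2,
\]
and the choice of $\pi_t$ is irrelevant on this branch. This is consistent with the discussion in Section~\ref{subsec:original-structure}, where zero risky exposure is optimal when $y_t\le0$.

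\emph{Convexity.} Define $f(y)=\tfrac12q_t^2\bigl(C_t^\tau y^2\1_{\{y>0\}}+y^2\1_{\{y\le0\}}\bigr)$. On $(0,\infty)$ it is convex because $C_t^\tau>0$, and on $(-\infty,0]$ it is convex as a plain quadratic. Both pieces equal $0$ at $y=0$, so $f$ is continuous. Its derivative is $q_t^2y$ for $y<0$ and $C_t^\tau q_t^2y$ for $y>0$; both one-sided derivatives at the origin are $0$, and the derivative is nondecreasing on each side. So $f'$ is continuous and nondecreasing on $\R$, which makes $f$ convex (indeed $C^1$). Note that $C_t^\tau\le0$ would break this: the positive branch would then be concave or flat, so positivity is exactly what is needed.

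\emph{Main obstacle.} The substantive step is the rigorous justification of the variational identity over the class of admissible densities. We must ensure that entropy and cost are well defined for competitors, and that the minimal value is $+\infty$-free exactly when the partition function is finite and positive. Positivity is automatic since the integrand is strictly positive on a set of positive measure. We will handle the remaining issues by restricting to densities with $\int\pi|\log\pi|<\infty$ and invoking $\mathrm{KL}(\pi\|\pi^*)\ge0$.
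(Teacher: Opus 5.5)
Your proposal is correct and follows essentially the same route as the paper. You factor out $q_t^2y_t^2$ on the positive branch, apply the Gibbs/KL variational identity to obtain $\tfrac12C_t^\tau$, and read off $\tfrac12q_t^2y_t^2$ on the inactive branch; you do this directly from the dynamics, where the paper also cites the zero-position argument of \citet{cui2014optimal}. You then glue the two quadratic pieces, which match in value and slope at the origin. One aside is slightly off: $C_t^\tau\le0$ does not break convexity at $C_t^\tau=0$, because $\tfrac12q_t^2y^2\1_{\{y\le0\}}$ is still convex; what fails there is strict convexity, not convexity.
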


\begin{proof}
For $y_t>0$, using the assumed form of $J_{t+1}^\tau(y_{t+1})$ and
$q_t=s_tq_{t+1}$ gives
\[
    \E_{t,y_t}\!\left[
        J_{t+1}^\tau(y_{t+1})
    \right]
    =
    q_t^2y_t^2
    h_t(K;C_{t+1}^\tau).
\]
Then, the objective in Bellman equation (\ref{eq:state-dependent-temperature-bellman}) becomes
\[
    q_t^2y_t^2
    \int_{\R_+^n}
    h_t(K;C_{t+1}^\tau)\pi_t(K)\,dK
    +
    \tau_t(y_t)
    \int_{\R_+^n}
    \pi_t(K)\log\pi_t(K)\,dK.
\]
Under \eqref{eq:state-dependent-temperature-new}, this expression can be
written as
\[
    q_t^2y_t^2
    \left[
        \int_{\R_+^n}
        h_t(K;C_{t+1}^\tau)\pi_t(K)\,dK
        +
        \tau
        \int_{\R_+^n}
        \pi_t(K)\log\pi_t(K)\,dK
    \right].
\]
Hence the state-dependent factor $q_t^2y_t^2$ separates from the
optimization over $\pi_t$, and the remaining variational problem is
independent of $y_t$.

When the integral in (\ref{eq:soft-logpartition-new})
 is finite, the
entropy variational identity yields
\[
\begin{aligned}
    &\min_{\pi_t}
    \left\{
        \int_{\R_+^n}
        h_t(K;C_{t+1}^\tau)\pi_t(K)\,dK
        +
        \tau
        \int_{\R_+^n}
        \pi_t(K)\log\pi_t(K)\,dK
    \right\}
    \\
   = &
    -\tau
    \log
    \int_{\R_+^n}
    \exp\!\left\{
        -\frac{h_t(K;C_{t+1}^\tau)}{\tau}
    \right\}dK
    =
    \frac12 C_t^\tau.
\end{aligned}
\]
Therefore, on the positive-state branch,
\[
    J_t^\tau(y_t)
    =
    \frac12 C_t^\tau q_t^2y_t^2,
    \qquad
    y_t>0.
\]

For $y_t\le0$, no randomized control is introduced
on this branch and the 
optimal risky position remains zero following the argument of \cite{cui2014optimal}. Thus, 
\[
    J_t^\tau(y_t)
    =
    \frac12 q_t^2y_t^2.
\]
Combining the two branches gives
\[
    J_t^\tau(y_t)
    =
    \frac12 C_t^\tau q_t^2y_t^2\1_{\{y_t>0\}}
    +
    \frac12 q_t^2y_t^2\1_{\{y_t\le0\}}.
\]

Finally, if $C_t^\tau>0$, both quadratic branches are convex. They agree
at $y=0$ and have the same derivative there, equal to zero. Hence
$J_t^\tau(y_t)$ is convex on $\R$.
\end{proof}

Consequently, if the continuation coefficients remain positive throughout
the backward recursion, both the original piecewise-quadratic
value-function structure and convexity are preserved over the entire
horizon. Economically, the scaling $\tau_t(y_t)=q_t^2y_t^2\tau$ makes the strength of entropy regularization proportional to the squared wealth gap that governs both risky exposure and the mean--variance continuation loss, so its relative importance remains stable across positive states and vanishes as the wealth gap approaches zero.

The remaining issue is whether the continuation coefficients can be kept
strictly positive throughout the backward recursion. A negative
$C_t^\tau$ makes the positive-state branch of $J_t^\tau(y_t)$ concave, while
positivity of $C_{t}^\tau$ ensures that the preceding stage function 
$h_{t-1}(\cdot;C_{t}^\tau)$ remains strictly convex and coercive under the
standing moment assumptions. We therefore seek a common temperature
range that preserves positivity over the entire horizon. 

\begin{theorem}
\label{thm:whole-horizon-convexity-new}
Let $\{C_t \}_{t=0}^T$ be the continuation coefficients of the original
no-shorting recursion in \eqref{eq:original-recursion-new}, with $C_t>0 $. Then there
exists $\tau_0>0$ such that, for every $0<\tau<\tau_0$:

\begin{enumerate}
    \item for every $t=0,1,\ldots,T-1$,
    \[
        \int_{\R_+^n}
        \exp\!\left\{
            -\frac{h_t(K;C_{t+1}^\tau)}{\tau}
        \right\}dK
        <\infty;
    \]

    \item
    \[
        C_t^\tau>0,
        \qquad
        t=0,1,\ldots,T;
    \]

    \item $J_t^\tau(\cdot)$ is convex on $\R$ for every
    $t=0,1,\ldots,T$;

    \item $h_t(\cdot;C_{t+1}^\tau)$ is strictly convex and coercive on
    $\R_+^n$ for every $t=0,1,\ldots,T-1$.
\end{enumerate}

Hence the structure-preserving auxiliary Bellman recursion remains a
sequence of well-posed convex problems over the entire horizon. Moreover,
\begin{equation}
    C_t^\tau\longrightarrow C_t,
    \qquad
    \tau\downarrow0,
    \qquad
    t=0,1,\ldots,T.
    \label{eq:continuation-recovery-new}
\end{equation}
\end{theorem}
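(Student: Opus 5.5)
The plan is a backward induction over a finite horizon, with the temperature bound obtained as a minimum of finitely many stage-wise bounds. The core tool is a Laplace-type small-temperature estimate for the soft log-partition \eqref{eq:soft-logpartition-new}, made uniform in the continuation coefficient $c$ over a compact interval around $C_{t+1}$.

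\textbf{Uniform stage estimate.} Fix $t$ and a compact interval $I=[a,b]\subset(0,\infty)$. For $c\in I$, the function $h_t(\cdot;c)$ is a pointwise combination of convex pieces. I will verify this by writing the integrand as $\frac12\phi_c(1-P_t^\top K)$ with $\phi_c(z)=cz^2\1_{\{z>0\}}+z^2\1_{\{z\le0\}}$, which is convex and $C^1$ whenever $c>0$, and satisfies $\phi_c(z)\ge \min(c,1)z^2$. Hence
\[
h_t(K;c)\ge \tfrac12\min(a,1)\,\E\bigl[(1-P_t^\top K)^2\bigr].
\]
Since $\E[P_tP_t^\top]\succ0$, the right-hand side is a coercive strictly convex quadratic. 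Strict convexity of $h_t(\cdot;c)$ itself follows because $\phi_c''\ge\min(c,1)$ a.e. and $\E[P_tP_t^\top]\succ0$; this gives item 4. The same lower bound yields a Gaussian-type majorant $h_t(K;c)\ge \alpha|K|^2-\beta$ with $\alpha,\beta$ depending only on $I$. From it I get finiteness of the partition function for every $\tau>0$, which is item 1.

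Next comes a two-sided Laplace bound. Let $m(c)=\min_{\R_+^n}h_t(\cdot;c)$, so that $2m(C_{t+1})=C_t$. For the upper bound, $h_t$ is locally Lipschitz in $K$ uniformly in $c\in I$ (moments are finite), and the minimizer set is bounded uniformly. Integrating over a small cube in $\R_+^n$ of side $\tau$ near the minimizer gives
\[
-\tau\log\int_{\R_+^n} e^{-h/\tau}\,dK\le m(c)+L\tau+n\tau\log(1/\tau).
\]
For the lower bound, I split $h\ge m(c)+\bigl(h-m(c)\bigr)$ and use the coercive majorant to show that $\int e^{-(h-m)/\tau}\le \int e^{-\gamma(|K|^2-R)_+/\tau}\,dK\le M$ for $\tau\le1$. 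This gives $-\tau\log\int\ge m(c)-\tau\log M$. Combining the two, $\sup_{c\in I}\bigl|\tfrac12 C_t^\tau(c)-m(c)\bigr|\le \epsilon(\tau)$ with $\epsilon(\tau)\to0$. In addition, $m(c)$ is continuous, indeed monotone and Lipschitz in $c$, since $h_t$ is affine in $c$ with coefficient $\frac12\E[(1-P^\top K)^2\1]$, which is bounded on the bounded region where minimizers lie.

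\textbf{Induction.} Start from $C_T^\tau=C_T=1$. Suppose that $|C_{t+1}^\tau-C_{t+1}|\le\delta_{t+1}(\tau)$, with $\delta_{t+1}(\tau)\to0$. Choose $I_{t+1}=[C_{t+1}/2,\,2C_{t+1}]$; this is possible because $C_{t+1}>0$. Then
\[
|C_t^\tau-C_t|\le 2\epsilon_t(\tau)+2L_m\,\delta_{t+1}(\tau)=:\delta_t(\tau)\to0.
\]
Take $\tau_0$ small enough that $\delta_s(\tau)<C_s/2$ for all $s$ along the whole recursion. This is a finite minimum over $s$, so $\tau_0>0$. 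It keeps every $C_s^\tau$ inside $I_s$ and in particular positive, which gives item 2 and the convergence \eqref{eq:continuation-recovery-new}. Proposition~\ref{prop:structure-compatible-scaling} then gives the piecewise-quadratic form and the convexity of item 3 at each date. Item 4 follows from the stage estimate applied with $c=C_{t+1}^\tau\in I_{t+1}$.

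\textbf{Main obstacle.} I expect the hardest part to be the uniformity in $c$ of the Laplace bounds, specifically the lower bound, which needs a coercivity constant independent of both $c$ and $\tau$. I would first attempt this via the explicit quadratic minorant above. A secondary subtlety is that the minimizer may lie on the boundary of $\R_+^n$, but the cube argument restricted to the orthant handles this, at the cost of only a constant factor in the volume.
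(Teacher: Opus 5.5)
Your proposal is correct and follows essentially the same route as the paper: backward induction from $C_T^\tau=1$, the bound $h_t(K;c)\ge\frac12\min(c,1)\E[(1-P_t^\top K)^2]$ for finiteness, coercivity and strict convexity, two-sided Laplace bounds showing $-\tau\log\int_{\R_+^n} e^{-h_t/\tau}\,dK\to\min h_t$, and a finite minimum of stage-wise temperature bounds. The differences are only technical. You make the Laplace estimate uniform in $c$ on a compact interval and quantitative, with an $O(\tau\log(1/\tau))$ error. You obtain the lower bound from coercivity alone rather than from the paper's $\gamma$-strong convexity. You handle the dependence on $C_{t+1}^\tau$ through Lipschitz continuity of $m(c)$, whereas the paper uses uniform convergence of $h_t(\cdot;C_{t+1}^\tau)$ on compacts together with a liminf/limsup argument.
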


\begin{proof}
We proceed by backward induction. At $t=T$,
\[
    C_T^\tau=C_T=1,
\]
so the result is immediate.

Suppose that, for some $t<T$, $C_{t+1}^\tau$ is well defined for all
sufficiently small $\tau$ and
\[
    C_{t+1}^\tau\longrightarrow C_{t+1}>0,
    \qquad
    \tau\downarrow0.
\]
Then, for all sufficiently small $\tau$,
\[
    C_{t+1}^\tau
    \ge
    \frac12C_{t+1}>0.
\]

We first establish the curvature properties of the stage problem. For a
fixed realization of $P_t$, the integrand in
$h_t(K;C_{t+1}^\tau)$ is
\[
    \frac12C_{t+1}^\tau(1-P_t^\top K)^2
    \qquad\text{if }P_t^\top K<1,
\]
and
\[
    \frac12(1-P_t^\top K)^2
    \qquad\text{if }P_t^\top K\ge1.
\]
The two pieces agree in value and first derivative at
$P_t^\top K=1$. Since $C_{t+1}^\tau>0$, both pieces are convex.
Moreover, for every nonzero direction $v$,
the expected second directional curvature is bounded below by
\[
    \min\{C_{t+1}^\tau,1\}
    v^\top\E[P_tP_t^\top]v>0.
\]
Hence $h_t(\cdot;C_{t+1}^\tau)$ is strictly convex. In addition,
\[
    h_t(K;C_{t+1}^\tau)
    \ge
    \frac12
    \min\!\left\{
        \frac12C_{t+1},1
    \right\}
    \E[(1-P_t^\top K)^2].
\]
Because $\E[P_tP_t^\top]\succ0$, the right-hand side is a coercive
quadratic function of $K$. Thus there exist constants $a>0$ and
$b<\infty$, independent of sufficiently small $\tau$, such that
\[
    h_t(K;C_{t+1}^\tau)
    \ge
    a\|K\|^2-b,
    \qquad
    K\in\R_+^n.
\]
It follows that
\[
    \int_{\R_+^n}
    \exp\!\left\{
        -\frac{h_t(K;C_{t+1}^\tau)}{\tau}
    \right\}dK
    <\infty.
\]

Next, from the definition of $h_t$,
\[
\begin{aligned}
    &h_t(K;C_{t+1}^\tau)
    -
    h_t(K;C_{t+1})
=
    \frac12
    (C_{t+1}^\tau-C_{t+1})
    \E\!\left[
        (1-P_t^\top K)^2
        \1_{\{P_t^\top K<1\}}
    \right].
\end{aligned}
\]
Hence
\[
    h_t(K;C_{t+1}^\tau)
    \longrightarrow
    h_t(K;C_{t+1})
\]
uniformly on every compact subset of $\R_+^n$. Together with the common
coercive lower bound, this keeps the minimizers in a common compact set
and gives
\begin{equation}
    \min_{K\in\R_+^n}
    h_t(K;C_{t+1}^\tau)
    \longrightarrow
    \min_{K\in\R_+^n}
    h_t(K;C_{t+1}).
    \label{eq:min-energy-convergence}
\end{equation}

It remains to establish the corresponding limit of the soft minimum.
Since $C_{t+1}^\tau$ is uniformly bounded away from zero, the preceding
curvature argument gives a constant $\gamma>0$, independent of
sufficiently small $\tau$, such that
$h_t(\cdot;C_{t+1}^\tau)$ is $\gamma$-strongly convex. Let
$K_t^{\tau,\sharp}$ be its unique minimizer. Then
\[
    h_t(K;C_{t+1}^\tau)
    \ge
    \min_{\nu\in\R_+^n}
    h_t(\nu;C_{t+1}^\tau)
    +
    \frac{\gamma}{2}
    \|K-K_t^{\tau,\sharp}\|^2.
\]
Consequently,
\begin{align*}
    &\int_{\R_+^n}
    \exp\!\left\{
        -\frac{h_t(K;C_{t+1}^\tau)}{\tau}
    \right\}dK
\le
    \exp\!\left\{
        -\frac{
            \min_{\nu\in\R_+^n}
            h_t(\nu;C_{t+1}^\tau)
        }{\tau}
    \right\}
    \left(
        \frac{2\pi\tau}{\gamma}
    \right)^{n/2}.
\end{align*}
Therefore,
\[
\begin{aligned}
    &-\tau
    \log
    \int_{\R_+^n}
    \exp\!\left\{
        -\frac{h_t(K;C_{t+1}^\tau)}{\tau}
    \right\}dK
\ge
    \min_{K\in\R_+^n}
    h_t(K;C_{t+1}^\tau)
    -
    \frac{n\tau}{2}
    \log\!\left(
        \frac{2\pi\tau}{\gamma}
    \right).
\end{aligned}
\]
Using \eqref{eq:min-energy-convergence} and $\tau\log\tau\to0$ yields
\[
\begin{aligned}
    \liminf_{\tau\downarrow0}
    \left[
        -\tau
        \log
        \int_{\R_+^n}
        \exp\!\left\{
            -\frac{h_t(K;C_{t+1}^\tau)}{\tau}
        \right\}dK
    \right]
    \ge
    \min_{K\in\R_+^n}
    h_t(K;C_{t+1}).
\end{aligned}
\]

For the reverse inequality, let $K_t^*$ be the minimizer of
$h_t(\cdot;C_{t+1})$. For any $\varepsilon>0$, continuity and local
uniform convergence imply that there exists $r>0$ such that, for all
sufficiently small $\tau$,
\[
    h_t(K;C_{t+1}^\tau)
    \le
    \min_{\nu\in\R_+^n}
    h_t(\nu;C_{t+1})
    +
    \varepsilon
\]
for every
\[
    K\in B_r(K_t^*)\cap\R_+^n,
    \qquad
    B_r(K_t^*)
    :=
    \{K\in\R^n:\|K-K_t^*\|<r\}.
\]
Since this set has positive Lebesgue measure,
\begin{align*}
    &\int_{\R_+^n}
    \exp\!\left\{
        -\frac{h_t(K;C_{t+1}^\tau)}{\tau}
    \right\}dK
\ge
    \left|
        B_r(K_t^*)\cap\R_+^n
    \right|
    \exp\!\left\{
        -\frac{
            \min_{\nu\in\R_+^n}
            h_t(\nu;C_{t+1})
            +
            \varepsilon
        }{\tau}
    \right\}.
\end{align*}
Hence
\[
\begin{aligned}
    \limsup_{\tau\downarrow0}
    \left[
        -\tau
        \log
        \int_{\R_+^n}
        \exp\!\left\{
            -\frac{h_t(K;C_{t+1}^\tau)}{\tau}
        \right\}dK
    \right]
    \le
    \min_{K\in\R_+^n}
    h_t(K;C_{t+1})
    +
    \varepsilon.
\end{aligned}
\]
Letting $\varepsilon\downarrow0$ gives
\[
    -\tau
    \log
    \int_{\R_+^n}
    \exp\!\left\{
        -\frac{h_t(K;C_{t+1}^\tau)}{\tau}
    \right\}dK
    \longrightarrow
    \min_{K\in\R_+^n}
    h_t(K;C_{t+1}).
\]

Using \eqref{eq:soft-logpartition-new},
\[
    C_t^\tau
    \longrightarrow
    2\min_{K\in\R_+^n}
    h_t(K;C_{t+1})
    =
    C_t>0.
\]
Thus $C_t^\tau>0$ for all sufficiently small $\tau$.
Proposition~\ref{prop:structure-compatible-scaling} then gives the
piecewise-quadratic and convex value function, while the curvature
argument above gives strict convexity and coercivity of
$h_t(\cdot;C_{t+1}^\tau)$. This completes the induction.

Finally, because the horizon is finite, the minimum of the finitely many
stagewise admissible temperature bounds is strictly positive. Choosing this
minimum as $\tau_0$ gives one common temperature range over which all
conclusions hold simultaneously. The same backward induction establishes
\eqref{eq:continuation-recovery-new}.
\end{proof}

Theorem~\ref{thm:whole-horizon-convexity-new} is the key
positive-temperature result. The state-dependent entropy scale preserves the
low-dimensional Bellman structure, while a sufficiently small common base
temperature prevents the recursively generated curvature coefficients from
crossing into the nonconvex region. The zero-temperature convergence in
\eqref{eq:continuation-recovery-new} is a consistency property of the
construction; the small-temperature restriction is already needed at a
strictly positive temperature to preserve convexity over the full horizon.
The theorem is existential. A more conservative but directly checkable
sufficient condition can be obtained from the first two moments of returns.

\begin{proposition}
\label{prop:explicit-temperature-bound-new}
Define
\[
    \Delta^*
    :=
    \min_{0\le t\le T-1}
    \left(
        1-
        \E[P_t]^\top
        \E[P_tP_t^\top]^{-1}
        \E[P_t]
    \right),
\]
and
\[
    d^*
    :=
    \min_{0\le t\le T-1}
    \det\!\left(
        \E[P_tP_t^\top]
    \right).
\]
Suppose $0<\Delta^*<1$ and
\[
    d^*
    \ge
    \left(
        \frac{2\pi e(1-\Delta^*)}{n}
    \right)^n.
\]
Then, there exists $\kappa>0$ satisfying
\[
    \kappa(\Delta^*-1)
    -
    n\log\!\left(
        \frac{2\pi}{\kappa}
    \right)
    +
    \log d^*
    \ge0.
\]
For any such $\kappa$, every
\[
    0<\tau<\frac1\kappa
\]
satisfies
\[
    C_t^\tau\ge\kappa\tau>0,
    \qquad
    t=0,1,\ldots,T-1.
\]
\end{proposition}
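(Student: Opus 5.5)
The idea is a backward induction on the invariant $C_{t+1}^\tau\ge\kappa\tau$. At each step I would bound the partition integral in \eqref{eq:soft-logpartition-new} from above by a Gaussian integral over all of $\R^n$, where the temperature cancels. The condition in the statement on $\kappa$ is then exactly what makes the invariant propagate. The existence of $\kappa$ is a separate one-variable calculus fact, which I would settle first.

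\emph{Existence of $\kappa$.} Set $f(\kappa)=\kappa(\Delta^*-1)+n\log\kappa-n\log(2\pi)+\log d^*$. Since $0<\Delta^*<1$, $f$ is strictly concave on $(0,\infty)$. It is maximized at $\kappa^\circ=n/(1-\Delta^*)$, with
\[
f(\kappa^\circ)=\log d^*-n\log\!\left(\frac{2\pi e(1-\Delta^*)}{n}\right).
\]
The hypothesis on $d^*$ gives $f(\kappa^\circ)\ge0$, so $\kappa=\kappa^\circ$ works.

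\emph{Induction.} Fix such a $\kappa$ and $0<\tau<1/\kappa$. At $t=T$ we have $C_T^\tau=1>\kappa\tau$. Suppose $C_{t+1}^\tau\ge\kappa\tau>0$.

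\begin{enumerate}
\item Pointwise in $P_t$, the integrand of $h_t$ is at least $\tfrac12\min\{C_{t+1}^\tau,1\}(1-P_t^\top K)^2$. Since $\kappa\tau<1$, this gives $\min\{C_{t+1}^\tau,1\}\ge\kappa\tau$, and hence
\[
h_t(K;C_{t+1}^\tau)\ge\frac{\kappa\tau}{2}\,\E[(1-P_t^\top K)^2].
\]
\item Dividing by $\tau$ removes the temperature, leaving $\tfrac{\kappa}{2}\E[(1-P_t^\top K)^2]$. Enlarge the domain of integration from $\R_+^n$ to $\R^n$ and complete the square exactly as in the proof of Proposition~\ref{prop:fixed-temperature-benchmark-new}. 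This gives
\[
\int_{\R_+^n}e^{-h_t/\tau}\,dK\le e^{-\kappa\Delta_t/2}\left(\frac{2\pi}{\kappa}\right)^{n/2}\det\!\bigl(\E[P_tP_t^\top]\bigr)^{-1/2},
\]
where $\Delta_t=1-\E[P_t]^\top\E[P_tP_t^\top]^{-1}\E[P_t]$.
\item The bound in step 2 shows the partition integral is finite, so $C_t^\tau$ is well defined. Taking $-2\tau\log$ and using $\Delta_t\ge\Delta^*$ and $\det(\E[P_tP_t^\top])\ge d^*$ yields
\[
C_t^\tau\ge\tau\Bigl[\kappa\Delta^*-n\log\!\left(\frac{2\pi}{\kappa}\right)+\log d^*\Bigr].
\]
\item By the defining inequality for $\kappa$, the bracket in step 3 is at least $\kappa$. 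Hence $C_t^\tau\ge\kappa\tau$, which closes the induction.
\end{enumerate}

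Each step uses only the lower bound on $C_{t+1}^\tau$, so the same $\tau$ range works uniformly for every $t$. Positivity of $C_t^\tau$ then also lets Proposition~\ref{prop:structure-compatible-scaling} apply at each date.

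\emph{Main obstacle.} The key step is the lower bound on $h_t$ in step 1. The continuation coefficient may be tiny, of order $\tau$, so the usable curvature is only of order $\kappa\tau$. Multiplying by $1/\tau$ makes the Gaussian bound $\tau$-free, and this explains why the condition is stated in terms of $\kappa$ rather than $\tau$. A second point to check is the case $C_{t+1}^\tau\ge1$: there the bound uses the factor $1$ instead of $C_{t+1}^\tau$, which is why $\tau<1/\kappa$ is needed. The enlargement to $\R^n$ in step 2 is harmless because the integrand is positive.
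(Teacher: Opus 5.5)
Your proposal is correct and follows the paper's proof essentially step for step. It uses the same concavity argument showing that $\kappa=n/(1-\Delta^*)$ maximizes the one-variable function and that the hypothesis on $d^*$ is exactly nonnegativity of its maximum. It also runs the same backward induction on $C_{t+1}^\tau\ge\kappa\tau$: the bound $h_t\ge\frac{\kappa\tau}{2}\E[(1-P_t^\top K)^2]$, enlarging the domain to $\R^n$, completing the square, and rearranging the defining inequality into $\kappa\Delta^*-n\log(2\pi/\kappa)+\log d^*\ge\kappa$. Your explicit remark that the Gaussian bound makes the partition integral finite, so that $C_t^\tau$ is well defined at each step, makes precise a point the paper leaves implicit.
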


\begin{proof}
We first show that the stated condition guarantees the existence of
$\kappa>0$ satisfying
\[
    \kappa(\Delta^*-1)
    -
    n\log\!\left(\frac{2\pi}{\kappa}\right)
    +
    \log d^*
    \ge0.
\]
Define
\[
    F(\kappa)
    :=
    \kappa(\Delta^*-1)
    -
    n\log\!\left(\frac{2\pi}{\kappa}\right)
    +
    \log d^*,
    \qquad \kappa>0.
\]
Since $0<\Delta^*<1$,
\[
    F'(\kappa)
    =
    \Delta^*-1+\frac{n}{\kappa},
    \qquad
    F''(\kappa)
    =
    -\frac{n}{\kappa^2}<0.
\]
Thus $F$ is strictly concave and attains its unique maximum at
\[
    \kappa^*
    =
    \frac{n}{1-\Delta^*}.
\]
At this point,
\[
\begin{aligned}
    F(\kappa^*)
    &=
    \log d^*
    +
    n\log\!\left(
        \frac{n}{2\pi e(1-\Delta^*)}
    \right).
\end{aligned}
\]
The assumed condition
\[
    d^*
    \ge
    \left(
        \frac{2\pi e(1-\Delta^*)}{n}
    \right)^n
\]
is therefore equivalent to
\[
    F(\kappa^*)\ge0.
\]
Hence there exists at least one $\kappa>0$ such that
$F(\kappa)\ge0$.

Fix any such $\kappa$ and any
\[
    0<\tau<\frac1\kappa.
\]
We prove
\[
    C_t^\tau\ge\kappa\tau,
    \qquad t=0,1,\ldots,T,
\]
by backward induction. Since $C_T^\tau=1$ and $\kappa\tau<1$, the
inequality holds at $t=T$.

Suppose that, for some $t<T$,
\[
    C_{t+1}^\tau\ge\kappa\tau.
\]
Because $\kappa\tau<1$,
\[
    \min\{C_{t+1}^\tau,1\}
    \ge\kappa\tau.
\]
From the definition of the stage function,
\[
\begin{aligned}
    h_t(K;C_{t+1}^\tau)
    &=
    \frac12
    \E\!\left[
        C_{t+1}^\tau
        (1-P_t^\top K)^2
        \1_{\{P_t^\top K<1\}}
        +
        (1-P_t^\top K)^2
        \1_{\{P_t^\top K\ge1\}}
    \right]
    \\
    &\ge
    \frac12
    \min\{C_{t+1}^\tau,1\}
    \E[(1-P_t^\top K)^2]
    \\
    &\ge
    \frac{\kappa\tau}{2}
    \E[(1-P_t^\top K)^2].
\end{aligned}
\]
It follows that
\[
\begin{aligned}
    &\int_{\R_+^n}
    \exp\!\left\{
        -\frac{h_t(K;C_{t+1}^\tau)}{\tau}
    \right\}dK
\le
    \int_{\R^n}
    \exp\!\left\{
        -\frac{\kappa}{2}
        \E[(1-P_t^\top K)^2]
    \right\}dK.
\end{aligned}
\]

Since $\E[P_tP_t^\top]\succ0$, completing the square gives
\[
\begin{aligned}
    \E[(1-P_t^\top K)^2]
    &=
    \left(
        K-\E[P_tP_t^\top]^{-1}\E[P_t]
    \right)^\top
    \E[P_tP_t^\top]
    \left(
        K-\E[P_tP_t^\top]^{-1}\E[P_t]
    \right)
    \\
    &\quad+
    1-
    \E[P_t]^\top
    \E[P_tP_t^\top]^{-1}
    \E[P_t].
\end{aligned}
\]
Therefore,
\[
\begin{aligned}
    &\int_{\R_+^n}
    \exp\!\left\{
        -\frac{h_t(K;C_{t+1}^\tau)}{\tau}
    \right\}dK
    \\
   \le &
    \exp\!\left\{
        -\frac{\kappa}{2}
        \left[
            1-
            \E[P_t]^\top
            \E[P_tP_t^\top]^{-1}
            \E[P_t]
        \right]
    \right\}
    \left(\frac{2\pi}{\kappa}\right)^{n/2}
    \det\!\left(\E[P_tP_t^\top]\right)^{-1/2}.
\end{aligned}
\]

Using
\[
    C_t^\tau
    =
    -2\tau
    \log
    \int_{\R_+^n}
    \exp\!\left\{
        -\frac{h_t(K;C_{t+1}^\tau)}{\tau}
    \right\}dK,
\]
we obtain
\[
\begin{aligned}
    C_t^\tau
    &\ge
    \tau
    \bigg[
        \kappa
        \left(
            1-
            \E[P_t]^\top
            \E[P_tP_t^\top]^{-1}
            \E[P_t]
        \right)
        -
        n\log\!\left(\frac{2\pi}{\kappa}\right)
        +
        \log\det\!\left(\E[P_tP_t^\top]\right)
    \bigg]
    \\
    &\ge
    \tau
    \left[
        \kappa\Delta^*
        -
        n\log\!\left(\frac{2\pi}{\kappa}\right)
        +
        \log d^*
    \right].
\end{aligned}
\]
Since $F(\kappa)\ge0$,
\[
    \kappa\Delta^*
    -
    n\log\!\left(\frac{2\pi}{\kappa}\right)
    +
    \log d^*
    \ge\kappa,
\]
and hence
\[
    C_t^\tau\ge\kappa\tau.
\]
This closes the backward induction. Therefore,
\[
    C_t^\tau\ge\kappa\tau>0,
    \qquad
    t=0,1,\ldots,T-1.
\]
\end{proof}

\subsection{Optimal policy and embedding solution}
\label{subsec:optimal-policy-embedding}

We now solve the state-dependent entropy-regularized auxiliary problem
for a fixed embedding multiplier and then optimize over the multiplier. For each embedding multiplier $\lambda\in\R$, the entropy-regularized
auxiliary problem is
\begin{equation}
\begin{aligned}
    \min_{\{\pi_t\}_{t=0}^{T-1}}\quad &
    \E_{0,y_0}\Bigg[
        \frac12y_T^2
        +
        \sum_{t=0}^{T-1}
        q_t^2y_t^2 \1_{\{y_t>0\}}
        \int_{\R_+^n}
        \pi_t(K)
        \log\pi_t(K)\,dK
    \Bigg].\\
    \text{s.t.}\quad & y_{t+1}= s_ty_t-s_t y_tP_t^\top K_t \1_{\{y_t>0\}},\\
    & K_t \sim \pi_t(K),\quad \pi(K)\ge 0, \quad \int_{\R_+^n}  \pi_t(K) dK = 1.
    \label{eq:regularized-auxiliary-problem}
\end{aligned}
\end{equation}
The auxiliary problem \eqref{eq:regularized-auxiliary-problem} can now
be solved by dynamic programming.

\begin{proposition}
\label{prop:soft-auxiliary-solution}
Suppose that $\tau$ lies in the convexity-preserving region of
Theorem~\ref{thm:whole-horizon-convexity-new}. Then
\begin{equation}
    J_t^\tau(y_t)
    =
    \frac12 C_t^\tau q_t^2y_t^2\1_{\{y_t>0\}}
    +
    \frac12 q_t^2y_t^2\1_{\{y_t\le0\}},
    \qquad
    t=0,1,\ldots,T.
    \label{eq:auxiliary-value-function}
\end{equation}
For $y_t>0$, the unique optimal density in the entropy-regularized
auxiliary problem is
\begin{equation}
    \pi_t^\tau(K)
    =
    \frac{
        \exp\!\left\{
            -h_t(K;C_{t+1}^\tau)/\tau
        \right\}
        \1_{\{K\in\R_+^n\}}
    }{
        \displaystyle
        \int_{\R_+^n}
        \exp\!\left\{
            -h_t(\nu;C_{t+1}^\tau)/\tau
        \right\}d\nu
    }.
    \label{eq:optimal-gibbs-policy}
\end{equation}
which is independent of the magnitude of
$y_t$. Its unique mode is
\begin{equation}
    K_t^{\tau,\sharp}
    =
    \argmin_{K\in\R_+^n}
    h_t(K;C_{t+1}^\tau).
    \label{eq:soft-mode}
\end{equation}
\end{proposition}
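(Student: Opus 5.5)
The plan is a backward induction on $t$ that combines Proposition~\ref{prop:structure-compatible-scaling} with Theorem~\ref{thm:whole-horizon-convexity-new}. Fix $\tau$ in the convexity-preserving region $(0,\tau_0)$. Theorem~\ref{thm:whole-horizon-convexity-new} supplies three facts at every date: the partition integrals are finite, $C_t^\tau>0$ for all $t$, and $h_t(\cdot;C_{t+1}^\tau)$ is strictly convex and coercive.

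At $t=T$ the value is $J_T^\tau(y)=\frac12y^2$, which has the form \eqref{eq:auxiliary-value-function} with $C_T^\tau=1$ and $q_T=1$. Assume the form holds at $t+1$ with $C_{t+1}^\tau>0$. The hypotheses of Proposition~\ref{prop:structure-compatible-scaling} are then met: finiteness of the integral and $C_t^\tau>0$ both come from the theorem. That proposition gives the piecewise-quadratic form at $t$, so the induction closes and \eqref{eq:auxiliary-value-function} holds for every $t$.

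For the optimal density on $y_t>0$, I would rewrite the positive-branch objective exactly as in the proof of Proposition~\ref{prop:structure-compatible-scaling}. It equals $q_t^2y_t^2$ times
$$\mathcal{F}(\pi)=\int_{\R_+^n} h_t\pi\,dK+\tau\int_{\R_+^n}\pi\log\pi\,dK.$$
Let $Z$ denote the partition integral and $\pi_t^\tau$ the Gibbs density \eqref{eq:optimal-gibbs-policy}. The Gibbs variational identity is
$$\mathcal{F}(\pi)=-\tau\log Z+\tau\,\mathrm{KL}(\pi\,\|\,\pi_t^\tau),$$
valid for every admissible $\pi$ with finite objective. Since $\mathrm{KL}\ge0$ with equality if and only if $\pi=\pi_t^\tau$ a.e., the minimizer exists and is unique. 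The factor $q_t^2y_t^2>0$ does not affect the argmin, and neither $h_t(\cdot;C_{t+1}^\tau)$ nor $\tau$ depends on $y_t$. Hence $\pi_t^\tau$ is independent of the magnitude of $y_t$.

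The delicate point is justifying the identity for all admissible $\pi$. This requires $\int h_t\pi<\infty$ and a well-defined entropy, so that the KL decomposition involves no $\infty-\infty$. I would restrict attention to densities with finite objective, which the Bellman formulation already presumes. Finiteness of $-\tau\log Z$ then follows from $Z<\infty$ and $Z>0$; positivity holds because $h_t$ is finite and continuous on $\R_+^n$.

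For the mode, $\pi_t^\tau$ is proportional to $\exp\{-h_t(K;C_{t+1}^\tau)/\tau\}$ on $\R_+^n$, and this is strictly decreasing in $h_t$. Its maximizers over $\R_+^n$ are therefore exactly the minimizers of $h_t(\cdot;C_{t+1}^\tau)$. Coercivity and continuity of $h_t$ on the closed set $\R_+^n$ give existence of a minimizer. Strict convexity from item~4 of Theorem~\ref{thm:whole-horizon-convexity-new} gives uniqueness, and this yields \eqref{eq:soft-mode}. Continuity of $h_t$ follows from dominated convergence, using the finite second moments of $P_t$.
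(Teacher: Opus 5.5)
Your proposal is correct and follows essentially the same route as the paper: backward induction on the piecewise-quadratic form, the Gibbs variational identity with the Kullback--Leibler decomposition for uniqueness of the optimal density, and identification of the mode with the minimizer of $h_t(\cdot;C_{t+1}^\tau)$, which is unique by the strict convexity in Theorem~\ref{thm:whole-horizon-convexity-new}. The differences are cosmetic: you delegate the value-function induction step to Proposition~\ref{prop:structure-compatible-scaling} where the paper redoes the computation inline, and you add slightly more careful remarks on existence of the minimizer (via coercivity and continuity) and on restricting to densities with finite objective.
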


\begin{proof}
Under the structure-compatible temperature
\eqref{eq:state-dependent-temperature-new}, the Bellman equation of the
entropy-regularized auxiliary problem is
\begin{equation}
J_t^\tau(y_t)
= 
\min_{\pi_t}
\bigg\{
    \int_{\R_+^n}
    \E_{t,y_t}[J_{t+1}^\tau(y_{t+1}) ]\pi_t(K)\,dK
    +
    \tau
    \int_{\R_+^n}
    \pi_t(K)\log\pi_t(K)\,dK
\bigg\},
\label{eq:auxiliary-bellman-step}
\end{equation}
with terminal condition
\[
    J_T^\tau(y_T)=\frac12y_T^2.
\]
The minimization in the positive-state branch is over probability
densities $\pi_t$ supported on $\R_+^n$.

We prove \eqref{eq:auxiliary-value-function} by backward induction.
At $t=T$,
\[
    J_T^\tau(y_T)=\frac12y_T^2,
\]
which agrees with \eqref{eq:auxiliary-value-function} because
$q_T=1$ and $C_T^\tau=1$.

Suppose that \eqref{eq:auxiliary-value-function} holds at date $t+1$.
For $y_t\le0$, no randomized control is introduced
on this branch and the optimal risky position remains zero following the argument of \cite{cui2014optimal}. Hence
\[
    y_{t+1}=s_ty_t\le0.
\]
Using the induction hypothesis and $q_t=s_tq_{t+1}$,
\[
\begin{aligned}
    J_t^\tau(y_t)
    &=
    J_{t+1}^\tau(s_ty_t) =
    \frac12q_{t+1}^2s_t^2y_t^2
    =
    \frac12q_t^2y_t^2.
\end{aligned}
\]

Now consider $y_t>0$. For a given $K\in\R_+^n$,
\[
    u_t=s_ty_tK,
    \qquad
    y_{t+1}
    =
    s_ty_t(1-P_t^\top K).
\]
By the induction hypothesis,
\begin{align*}
    \E_{t,y_t}\!\left[
        J_{t+1}^\tau(y_{t+1})
    \right]
    &=
    \frac12q_t^2y_t^2
    \E\!\left[
        C_{t+1}^\tau
        (1-P_t^\top K)^2
        \1_{\{P_t^\top K<1\}}
        +
        (1-P_t^\top K)^2
        \1_{\{P_t^\top K\ge1\}}
    \right]=
    q_t^2y_t^2
    h_t(K;C_{t+1}^\tau).
\end{align*}
Since $\tau$ lies in the convexity-preserving region of
Theorem~\ref{thm:whole-horizon-convexity-new},
\[
    \int_{\R_+^n}
    \exp\!\left\{
        -\frac{h_t(K;C_{t+1}^\tau)}{\tau}
    \right\}dK
    <\infty.
\]
For any admissible density $\pi_t$ on $\R_+^n$, the entropy variational
identity gives
\begin{align*}
    &\int_{\R_+^n}
    h_t(K;C_{t+1}^\tau)\pi_t(K)\,dK
    +
    \tau
    \int_{\R_+^n}
    \pi_t(K)\log\pi_t(K)\,dK\\
    =&
    -\tau
    \log
    \int_{\R_+^n}
    \exp\!\left\{
        -\frac{h_t(K;C_{t+1}^\tau)}{\tau}
    \right\}dK
+
    \tau
    \int_{\R_+^n}
    \pi_t(K)
    \log
    \frac{\pi_t(K)}{\pi_t^\tau(K)}
    \,dK.
\end{align*}
The last term is the Kullback--Leibler divergence from $\pi_t$ to
$\pi_t^\tau$. It is nonnegative and equals zero if and only if
$\pi_t=\pi_t^\tau$ almost everywhere. Hence
\eqref{eq:optimal-gibbs-policy} is the unique minimizer of the
positive-state Bellman problem.

Using the definition
\[
    C_t^\tau
    =
    -2\tau
    \log
    \int_{\R_+^n}
    \exp\!\left\{
        -\frac{h_t(K;C_{t+1}^\tau)}{\tau}
    \right\}dK,
\]
the minimum value in
\eqref{eq:auxiliary-bellman-step} is
\[
    J_t^\tau(y_t)
    =
    \frac12C_t^\tau q_t^2y_t^2,
    \qquad
    y_t>0.
\]
Combining this result with the nonpositive-state branch gives
\eqref{eq:auxiliary-value-function} and completes the backward induction.

Finally, the denominator in
\eqref{eq:optimal-gibbs-policy} is independent of $K$, and the exponential
is strictly decreasing in $h_t(K;C_{t+1}^\tau)$. Hence the mode of
$\pi_t^\tau(K)$ coincides with
\[
    \argmin_{K\in\R_+^n}
    h_t(K;C_{t+1}^\tau).
\]
Theorem~\ref{thm:whole-horizon-convexity-new} gives strict convexity of
$h_t(\cdot;C_{t+1}^\tau)$, so the minimizer is unique. This proves
\eqref{eq:soft-mode}.
\end{proof}

For each fixed $\lambda$, Proposition~\ref{prop:soft-auxiliary-solution}
gives the optimal value of the entropy-regularized auxiliary problem with
initial transformed state
\[
    y_0=(d-\lambda)q_0^{-1}-x_0.
\]
Restoring the outer optimization over the embedding multiplier, we define
the regularized embedded problem by
\begin{equation}
    \lambda_\tau^*
    =
    \argmax_{\lambda\in\R}
    \left\{
        2J_0^\tau\!\left(
            (d-\lambda)q_0^{-1}-x_0
        \right)
        -
        \lambda^2
    \right\}.
    \label{eq:regularized-embedded-problem}
\end{equation}
Since
\[
    q_0y_0=d-\lambda-q_0x_0,
\]
\eqref{eq:auxiliary-value-function} implies that the objective in
\eqref{eq:regularized-embedded-problem} is
\[
    \begin{cases}
        C_0^\tau(d-\lambda-q_0x_0)^2-\lambda^2,
        &
        \lambda<d-q_0x_0,
        \\[1mm]
        (d-\lambda-q_0x_0)^2-\lambda^2,
        &
        \lambda\ge d-q_0x_0.
    \end{cases}
\]

\begin{theorem}
\label{thm:optimal-multiplier-feedback}
Suppose that the conditions of Theorem~\ref{thm:whole-horizon-convexity-new} hold, with $0<C_0<1$ and $d>q_0x_0$. Then there exists $\tau_0>0$ such that, for every $0<\tau<\tau_0$, $0<C_0^\tau<1$ and the regularized embedded problem \eqref{eq:regularized-embedded-problem} has the unique solution 
\begin{equation}
    \lambda_\tau^*
    =
    -\frac{C_0^\tau}{1-C_0^\tau}
    (d-q_0x_0).
    \label{eq:optimal-soft-multiplier}
\end{equation}
The corresponding optimal randomized portfolio policy is
\begin{equation}
    u_t^{\tau,*}
    =
    s_t
    \left(
        (d-\lambda_\tau^*)q_t^{-1}-x_t
    \right)_+
    K_t,
    \qquad
    K_t\sim\pi_t^\tau,
    \label{eq:optimal-randomized-policy}
\end{equation}
where $\pi_t^\tau$ is given by
\eqref{eq:optimal-gibbs-policy}. The associated deterministic mode
feedback is
\begin{equation}
    u_t^{\tau,\sharp}(x_t)
    =
    s_t
    \left(
        (d-\lambda_\tau^*)q_t^{-1}-x_t
    \right)_+
    K_t^{\tau,\sharp},
    \label{eq:optimal-mode-feedback}
\end{equation}
where $K_t^{\tau,\sharp}$ is defined in
\eqref{eq:soft-mode}.
\end{theorem}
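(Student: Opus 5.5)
The plan is to combine the zero-temperature recovery \eqref{eq:continuation-recovery-new} of Theorem~\ref{thm:whole-horizon-convexity-new} with an elementary one-dimensional maximization of the piecewise objective displayed just before the statement. The randomized and mode policies then follow by substituting the optimal multiplier into Proposition~\ref{prop:soft-auxiliary-solution}.

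\emph{Step 1: choice of $\tau_0$.} Since $C_0^\tau\to C_0\in(0,1)$ as $\tau\downarrow0$, there is $\tau_1>0$ with $0<C_0^\tau<1$ for all $0<\tau<\tau_1$. Let $\tau_0$ be the minimum of $\tau_1$ and the bound from Theorem~\ref{thm:whole-horizon-convexity-new}. For $0<\tau<\tau_0$, Proposition~\ref{prop:soft-auxiliary-solution} then applies, so $J_0^\tau$ has the two-branch form \eqref{eq:auxiliary-value-function}.

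\emph{Step 2: the outer problem.} Set $a:=d-q_0x_0>0$ and write $g(\lambda)$ for the objective in \eqref{eq:regularized-embedded-problem}. The objective is continuous at $\lambda=a$.
\begin{itemize}
\item On $\lambda\ge a$ we have $g(\lambda)=(a-\lambda)^2-\lambda^2=a^2-2a\lambda$. This is strictly decreasing, so its supremum on this branch is $g(a)=-a^2$.
\item On $\lambda<a$ we have $g(\lambda)=C_0^\tau(a-\lambda)^2-\lambda^2$. Since $g''=2(C_0^\tau-1)<0$, this branch is strictly concave. The first-order condition $-2C_0^\tau(a-\lambda)-2\lambda=0$ gives $\lambda=-C_0^\tau a/(1-C_0^\tau)$, which is \eqref{eq:optimal-soft-multiplier}.
\end{itemize}
This stationary point is negative, hence lies in $\lambda<a$. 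Its value is $C_0^\tau a^2/(1-C_0^\tau)>0>-a^2$. Therefore it is the unique global maximizer over $\R$: strict concavity on the left branch gives uniqueness there, and the right branch never exceeds $-a^2$.

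\emph{Step 3: policies.} At $\lambda_\tau^*$ the transformed state is $y_t=(d-\lambda_\tau^*)q_t^{-1}-x_t$. On $y_t>0$, Proposition~\ref{prop:soft-auxiliary-solution} gives the control $u_t=s_ty_tK_t$ with $K_t\sim\pi_t^\tau$. On $y_t\le0$ the control is zero. Writing both cases as $s_t(y_t)_+K_t$ yields \eqref{eq:optimal-randomized-policy}. Replacing $K_t$ by the unique mode $K_t^{\tau,\sharp}$ from \eqref{eq:soft-mode} gives \eqref{eq:optimal-mode-feedback}.

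The main obstacle is ensuring that $0<C_0^\tau<1$ holds uniformly for small $\tau$, since the whole-horizon convexity result only gives positivity. This is why the recovery limit together with the strict hypothesis $C_0<1$ is used in Step 1. A secondary check is that the maximizer falls in the correct branch, which uses $d>q_0x_0$.
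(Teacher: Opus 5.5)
Your proposal is correct and follows essentially the same route as the paper: you use the recovery $C_0^\tau\to C_0\in(0,1)$ to get $0<C_0^\tau<1$ for small $\tau$. You then maximize branch by branch, comparing the strictly concave left branch (whose negative stationary point has value $C_0^\tau a^2/(1-C_0^\tau)>0$) with the decreasing right branch bounded by $-a^2$, and you finish by substituting $\lambda_\tau^*$ into Proposition~\ref{prop:soft-auxiliary-solution}; your explicit choice of $\tau_0$ as the minimum of the two bounds only makes precise what the paper leaves implicit.
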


\begin{proof}
By Theorem~\ref{thm:whole-horizon-convexity-new},
\[
    C_0^\tau\longrightarrow C_0
    \qquad
    \text{as }\tau\downarrow0.
\]
Since $0<C_0<1$, we have
\[
    0<C_0^\tau<1
\]
for every $0<\tau<\tau_0$.

On the branch $\lambda<d-q_0x_0$, the objective in
\eqref{eq:regularized-embedded-problem} is
\[
    C_0^\tau(d-\lambda-q_0x_0)^2-\lambda^2.
\]
Since $0<C_0^\tau<1$, this expression is strictly concave in $\lambda$.
Its unique maximizer is
\[
    \lambda_\tau^*
    =
    -\frac{C_0^\tau}{1-C_0^\tau}
    (d-q_0x_0),
\]
which is negative and therefore satisfies
$\lambda_\tau^*<d-q_0x_0$.

On the branch $\lambda\ge d-q_0x_0$, the objective becomes
\[
    (d-q_0x_0)^2
    -
    2(d-q_0x_0)\lambda,
\]
which is strictly decreasing because $d>q_0x_0$. Its maximum on this
branch is therefore attained at $\lambda=d-q_0x_0$ and equals
\[
    -(d-q_0x_0)^2.
\]
By contrast, the objective at $\lambda_\tau^*$ is
\[
    \frac{C_0^\tau}{1-C_0^\tau}
    (d-q_0x_0)^2>0.
\]
Hence $\lambda_\tau^*$ is the unique global solution of
\eqref{eq:regularized-embedded-problem}.

For this optimal multiplier, the transformed state is
\[
    y_t
    =
    (d-\lambda_\tau^*)q_t^{-1}-x_t.
\]
Proposition~\ref{prop:soft-auxiliary-solution} then implies that, whenever
$y_t>0$, the optimal coefficient satisfies
$K_t\sim\pi_t^\tau$, whereas the optimal risky position is zero for
$y_t\le0$. This gives \eqref{eq:optimal-randomized-policy}. Replacing
$K_t$ by its unique mode $K_t^{\tau,\sharp}$ gives
\eqref{eq:optimal-mode-feedback}.
\end{proof}

The multiplier $\lambda_\tau^*$ is chosen once at the initial date through
the outer problem \eqref{eq:regularized-embedded-problem} and is then held
fixed throughout the auxiliary dynamic program. At positive temperature,
it is the optimizer of the regularized embedding and need not enforce the
terminal mean constraint exactly. The original no-shorting solution is
recovered as $\tau\downarrow0$.

\begin{proposition}
\label{prop:zero-temperature-recovery}
Under the assumptions of
Theorem~\ref{thm:optimal-multiplier-feedback}, for every
$t=0,1,\ldots,T-1$,
\begin{equation*}
    K_t^{\tau,\sharp}
    \longrightarrow
    K_t^*,
    \qquad
    \tau\downarrow0,
\end{equation*}
and
\begin{equation*}
    \lambda_\tau^*
    \longrightarrow
    \lambda^*
    :=
    -\frac{C_0}{1-C_0}
    (d-q_0x_0).
\end{equation*}
Consequently, for every fixed wealth state $x_t$,
\begin{equation*}
    u_t^{\tau,\sharp}(x_t)
    \longrightarrow
    u_t^*(x_t),
\end{equation*}
where
\[
    u_t^*(x_t)
    =
    s_t
    \left(
        (d-\lambda^*)q_t^{-1}-x_t
    \right)_+
    K_t^* 
\]
is the optimal feedback policy of the original no-shorting problem and $(\cdot)_+ = \max{\{\cdot, 0\}}$.
\end{proposition}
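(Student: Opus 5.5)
The plan splits into three parts: the mode coefficients, the multiplier, and then the two combined.

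\emph{Step 1: the mode coefficients.} We show $K_t^{\tau,\sharp}\to K_t^*$ at each fixed date $t$ by an argmin-continuity argument. This rests on ingredients already in the proof of Theorem~\ref{thm:whole-horizon-convexity-new}.
\begin{itemize}
\item First, $C_{t+1}^\tau\to C_{t+1}$.
\item Second, the identity
\[
h_t(K;C_{t+1}^\tau)-h_t(K;C_{t+1})=\tfrac12(C_{t+1}^\tau-C_{t+1})\,\E\bigl[(1-P_t^\top K)^2\1_{\{P_t^\top K<1\}}\bigr],
\]
whose expectation is bounded by $\E[(1-P_t^\top K)^2]$.
\item Third, the uniform strong-convexity modulus $\gamma>0$ of $h_t(\cdot;C_{t+1}^\tau)$ for small $\tau$.
\end{itemize}

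We also need the common coercive lower bound $h_t(K;C_{t+1}^\tau)\ge a\|K\|^2-b$. It confines all minimizers $K_t^{\tau,\sharp}$ and $K_t^*$ to a fixed compact set $B\subset\R_+^n$. On $B$ the difference above is at most $\varepsilon_\tau:=\tfrac12|C_{t+1}^\tau-C_{t+1}|M_B$, with $M_B<\infty$.

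The comparison uses strong convexity of $h_t(\cdot;C_{t+1}^\tau)$ at its constrained minimizer on the convex set $\R_+^n$. This gives
\[
h_t(K_t^*;C_{t+1}^\tau)\ge h_t(K_t^{\tau,\sharp};C_{t+1}^\tau)+\tfrac{\gamma}{2}\|K_t^*-K_t^{\tau,\sharp}\|^2 .
\]
Optimality of $K_t^*$ for the unperturbed problem gives $h_t(K_t^*;C_{t+1})\le h_t(K_t^{\tau,\sharp};C_{t+1})$. Combining the two and swapping $C_{t+1}^\tau$ and $C_{t+1}$ at a cost of $\varepsilon_\tau$ per swap yields
\[
\tfrac{\gamma}{2}\|K_t^*-K_t^{\tau,\sharp}\|^2\le 2\varepsilon_\tau\to0 .
\]
Uniqueness of $K_t^*$ follows from strict convexity of $h_t(\cdot;C_{t+1})$, which holds because $C_{t+1}>0$.

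\emph{Step 2: the multiplier.} Theorem~\ref{thm:optimal-multiplier-feedback} gives the closed form \eqref{eq:optimal-soft-multiplier} for $\lambda_\tau^*$. The map $c\mapsto -\frac{c}{1-c}(d-q_0x_0)$ is continuous on $(0,1)$. Since $C_0^\tau\to C_0\in(0,1)$ by \eqref{eq:continuation-recovery-new}, we get $\lambda_\tau^*\to\lambda^*$.

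\emph{Step 3: the feedback.} Fix $x_t$. The scalar
\[
s_t\bigl((d-\lambda_\tau^*)q_t^{-1}-x_t\bigr)_+
\]
converges to $s_t\bigl((d-\lambda^*)q_t^{-1}-x_t\bigr)_+$, because $(\cdot)_+$ is continuous. Multiplying by $K_t^{\tau,\sharp}\to K_t^*$ gives $u_t^{\tau,\sharp}(x_t)\to u_t^*(x_t)$. This also identifies $u_t^*$ with the original no-shorting feedback: at $\tau=0$ the same embedding argument applied to \eqref{eq:original-piecewise-value-new} gives the multiplier $\lambda^*$, and the auxiliary control is $s_ty_tK_t^*\1_{\{y_t>0\}}$.

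The main obstacle is Step 1. Pointwise convergence of $h_t$ is not enough; we need uniformity on a common compact set, plus the uniform modulus $\gamma$. Both must be imported from the curvature bound $\min\{C_{t+1}^\tau,1\}\,v^\top\E[P_tP_t^\top]v$ of Theorem~\ref{thm:whole-horizon-convexity-new}, using $C_{t+1}^\tau\ge \tfrac12 C_{t+1}$ for small $\tau$.
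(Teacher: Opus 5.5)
Your proposal is correct and follows the paper's proof step for step: argmin continuity for the modes (using $C_{t+1}^\tau\to C_{t+1}$ and the common coercive bound), continuity of the closed-form multiplier \eqref{eq:optimal-soft-multiplier} in $C_0^\tau$, and continuity of the positive part. The only refinement is in Step~1, where you replace the paper's appeal to ``standard argmin convergence'' with the quadratic-growth inequality from the uniform strong-convexity modulus $\gamma$; this makes the argument self-contained and also yields the rate $\|K_t^{\tau,\sharp}-K_t^*\|^2\le 4\varepsilon_\tau/\gamma=O(|C_{t+1}^\tau-C_{t+1}|)$.
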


\begin{proof}
By Theorem~\ref{thm:whole-horizon-convexity-new},
\[
    C_{t+1}^\tau
    \longrightarrow
    C_{t+1}.
\]
Hence
\[
    h_t(K;C_{t+1}^\tau)
    \longrightarrow
    h_t(K;C_{t+1})
\]
uniformly on every compact subset of $\R_+^n$. The common coercive bound
established in Theorem~\ref{thm:whole-horizon-convexity-new} keeps the
minimizers in a common compact set for all sufficiently small $\tau$.
Since
\[
    K_t^{\tau,\sharp}
    =
    \argmin_{K\in\R_+^n}
    h_t(K;C_{t+1}^\tau),
\]
and $h_t(\cdot;C_{t+1})$ is strictly convex with unique minimizer $K_t^*$,
standard argmin convergence gives
\[
    K_t^{\tau,\sharp}
    \longrightarrow
    K_t^*.
\]

Next, \eqref{eq:optimal-soft-multiplier} and
$C_0^\tau\to C_0\in(0,1)$ imply
\[
    \lambda_\tau^*
    =
    -\frac{C_0^\tau}{1-C_0^\tau}
    (d-q_0x_0)
    \longrightarrow
    -\frac{C_0}{1-C_0}
    (d-q_0x_0)
    =
    \lambda^*.
\]
Therefore,
\[
    (d-\lambda_\tau^*)q_t^{-1}
    \longrightarrow
    (d-\lambda^*)q_t^{-1}.
\]
Combining this convergence with
$K_t^{\tau,\sharp}\to K_t^*$ and continuity of the positive-part mapping
gives, for every fixed $x_t$,
\[
\begin{aligned}
    u_t^{\tau,\sharp}(x_t)
    &=
    s_t
    \left(
        (d-\lambda_\tau^*)q_t^{-1}-x_t
    \right)_+
    K_t^{\tau,\sharp}
\longrightarrow
    s_t
    \left(
        (d-\lambda^*)q_t^{-1}-x_t
    \right)_+
    K_t^*
    =
    u_t^*(x_t).
\end{aligned}
\]
\end{proof}

\section{Sample-Based Numerical Implementation}
\label{sec:sample-implementation} 

The recursion in Section~\ref{sec:structure-preserving-soft} depends on expectations with respect to the
distribution of risky excess returns and on partition integrals over
$\R_+^n$. When the return distribution is unknown, we replace the former
by sample averages and evaluate the latter numerically. 

At date $t$, let
\[
    \{P_t^{(j)}\}_{j=1}^{M_t}
\]
denote the available sample of risky excess returns. For a continuation
coefficient $c>0$, define the empirical counterpart of
$h_t(K;c)$ on $\R_+^n $ by
\begin{equation}
\begin{aligned}
    \widehat h_t(K;c)
    :=
    \frac{1}{2M_t}
    \sum_{j=1}^{M_t}
    \Big[
        &c\bigl(1-(P_t^{(j)})^\top K\bigr)^2
        \1_{\{(P_t^{(j)})^\top K<1\}}+
        \bigl(1-(P_t^{(j)})^\top K\bigr)^2
        \1_{\{(P_t^{(j)})^\top K\ge1\}}
    \Big].
\end{aligned}
\label{eq:empirical-stage-energy}
\end{equation}
Starting from
\(
    \widehat C_T^\tau=1,
\)
the sample analogue of the structure-preserving recursion is
\begin{equation}
    \widehat C_t^\tau
    =
    -2\tau
    \log
    \int_{\R_+^n}
    \exp\!\left\{
        -\frac{
            \widehat h_t(K;\widehat C_{t+1}^\tau)
        }{\tau}
    \right\}dK,
    \qquad
    t=T-1,\ldots,0,
\label{eq:empirical-soft-recursion}
\end{equation}
whenever the integral is finite. The corresponding empirical mode is
\begin{equation}
    \widehat K_t^{\tau,\sharp}
    =
    \argmin_{K\in\R_+^n}
    \widehat h_t(K;\widehat C_{t+1}^\tau).
\label{eq:empirical-mode}
\end{equation}
Conditional on $\widehat C_{t+1}^\tau>0$, the empirical stage function is
convex; when the empirical second-moment matrix is positive definite, it
is strictly convex and coercive. Hence
\eqref{eq:empirical-mode} can be solved by standard constrained convex
optimization.

The remaining numerical task is the integral in
\eqref{eq:empirical-soft-recursion}. For small $\tau$, its mass is
concentrated around $\widehat K_t^{\tau,\sharp}$. We therefore use a
lower-truncated Gaussian importance distribution centered at this mode.
Define the local curvature matrix
\[
    \widehat G_t
    =
    \frac{1}{M_t}
    \sum_{j=1}^{M_t}
    \widehat\omega_{t,j}
    P_t^{(j)}(P_t^{(j)})^\top,
\]
where
\[
    \widehat\omega_{t,j}
    =
    \widehat C_{t+1}^\tau
    \1_{\{(P_t^{(j)})^\top\widehat K_t^{\tau,\sharp}<1\}}
    +
    \1_{\{(P_t^{(j)})^\top\widehat K_t^{\tau,\sharp}\ge1\}}.
\]
For numerical stability, let
\[
    \widehat G_{t,\epsilon}
    =
    \widehat G_t+\epsilon I_n,
    \qquad
    \Gamma_t
    =
    \tau\widehat G_{t,\epsilon}^{-1},
\]
where $\epsilon>0$ is a small ridge parameter. Let $\psi_t$ denote the
$N(\widehat K_t^{\tau,\sharp},\Gamma_t)$ density truncated to
$\R_+^n$. Using $N_q$ randomized Sobol points transported through $\psi_t$, denote
the resulting draws by
\[
    K_t^{(\ell)},
    \qquad
    \ell=1,\ldots,N_q.
\]
The log-partition integral is estimated directly on the log scale as
\begin{equation}
\begin{aligned}
    &\log
    \int_{\R_+^n}
    \exp\!\left\{
        -\frac{
            \widehat h_t(K;\widehat C_{t+1}^\tau)
        }{\tau}
    \right\}dK
 \approx
    -\log N_q
    +
    \operatorname{LSE}_{1\le\ell\le N_q}
    \left[
        -\frac{
            \widehat h_t(
                K_t^{(\ell)};
                \widehat C_{t+1}^\tau
            )
        }{\tau}
        -
        \log\psi_t(K_t^{(\ell)})
    \right],
\end{aligned}
\label{eq:empirical-logpartition}
\end{equation}
where
\[
    \operatorname{LSE}(a_1,\ldots,a_m)
    :=
    \log\!\left(\sum_{\ell=1}^m e^{a_\ell}\right)
\]
is evaluated using the standard log-sum-exp stabilization.

The positivity condition in Theorem~\ref{thm:whole-horizon-convexity-new}
is imposed directly on the sample recursion. If
$\widehat C_t^\tau\le0$ at any date, the backward recursion is stopped
and the temperature is reduced.  After a successful backward recursion, the sample plug-in counterpart of
\eqref{eq:optimal-soft-multiplier} is
\begin{equation}
    \widehat\lambda_\tau^*
    =
    -\frac{\widehat C_0^\tau}
    {1-\widehat C_0^\tau}
    (d-q_0x_0).
\label{eq:empirical-multiplier}
\end{equation}
The deterministic mode policy used in the numerical and empirical
analysis is therefore
\begin{equation}
    \widehat u_t^{\tau,\sharp}(x_t)
    =
    s_t
    \left(
        (d-\widehat\lambda_\tau^*)q_t^{-1}-x_t
    \right)_+
    \widehat K_t^{\tau,\sharp}.
\label{eq:empirical-mode-policy}
\end{equation}
If randomized execution is desired, the corresponding empirical Gibbs
density is obtained by replacing $h_t$ and $C_{t+1}^\tau$ in
\eqref{eq:optimal-gibbs-policy} with
$\widehat h_t$ and $\widehat C_{t+1}^\tau$, respectively.

The complete procedure is summarized in Algorithm~\ref{alg:sample-soft}.
At each date, it requires only one constrained coefficient minimization
and one log-partition calculation; no discretization of the wealth state
is required.

\begin{algorithm}[ht]
\caption{Sample-based structure-preserving recursion}
\label{alg:sample-soft}
\begin{algorithmic}[1]
\Require Samples $\{P_t^{(j)}\}_{j=1}^{M_t}$,
         $\{s_t\}_{t=0}^{T-1}$, $x_0,d,\tau,N_q,\epsilon$.
\State Set $\widehat C_T^\tau\gets1$.
\For{$t=T-1,\ldots,0$}
    \State Construct
    $\widehat h_t(K;\widehat C_{t+1}^\tau)$ from
    \eqref{eq:empirical-stage-energy}.
    \State Compute
    $\widehat K_t^{\tau,\sharp}
    \gets
    \argmin_{K\in\R_+^n}
    \widehat h_t(K;\widehat C_{t+1}^\tau)$ by standard constrained convex
optimization algorithm.
    \State Form $\widehat G_{t,\epsilon}$ and the truncated Gaussian
    proposal $\psi_t$ centered at $\widehat K_t^{\tau,\sharp}$.
    \State Estimate the log-partition integral by randomized Sobol
quasi-Monte Carlo (QMC) importance sampling using
\eqref{eq:empirical-logpartition}.
    \State Update $\widehat C_t^\tau$ using
    \eqref{eq:empirical-soft-recursion}.
    \If{$\widehat C_t^\tau\le0$}
        \State Stop; reduce $\tau$ and restart from
        $\widehat C_T^\tau=1$.
    \EndIf
\EndFor
\If{$0<\widehat C_0^\tau<1$ and $d>q_0x_0$}
    \State Compute $\widehat\lambda_\tau^*$ from
    \eqref{eq:empirical-multiplier} and execute
    \eqref{eq:empirical-mode-policy}.
\Else
    \State The closed-form embedding solution is not applicable.
\EndIf
\end{algorithmic}
\end{algorithm}

\section{Numerical Analysis}
\label{sec:numerical-analysis}

The numerical analysis examines the two main implications of the theory.
We first verify that, under no-shorting, a fixed absolute entropy
temperature generally destroys the quadratic Bellman closure, whereas the
state-dependent temperature in
\eqref{eq:state-dependent-temperature-new} preserves it. We then examine
the whole-horizon positivity condition in
Theorem~\ref{thm:whole-horizon-convexity-new} and evaluate portfolio
performance within the resulting admissible temperature region.

We use a two-regime Gaussian-mixture design with $n=4$ risky assets and
$T=6$ investment periods. The risky excess-return vector is generated from
a two-component lognormal mixture:
\[
    P_t+s_t\mathbf 1
    \sim
    0.82\,\mathrm{LN}(\mu_N,\Sigma_N)
    +
    0.18\,\mathrm{LN}(\mu_S,\Sigma_S),
    \qquad
    s_t=1.005,
\]
where $\mathrm{LN}(\mu,\Sigma)$ denotes the multivariate lognormal
distribution whose logarithm is Gaussian with mean $\mu$ and covariance
matrix $\Sigma$.

The regime-specific mean vectors are
\begin{align*}
    &\mu_N
    =
    (0.036332,\;0.029559,\;0.023717,\;0.040182)^\top,\\
      &  \mu_S
    =
    (-0.058668,\;-0.065441,\;-0.071283,\;-0.054818)^\top.
\end{align*}
The covariance matrices are
\[
    \Sigma_N=D_NR D_N,
    \qquad
    \Sigma_S=D_SR D_S,
\]
with
\[
    D_N
    =
    \operatorname{diag}
    (0.050,\;0.065,\;0.080,\;0.055),
    \quad
    D_S=2.25D_N,\quad     R
    =
    \begin{pmatrix}
        1    & 0.55 & 0.55 & 0.55\\
        0.55 & 1    & 0.55 & 0.55\\
        0.55 & 0.55 & 1    & 0.55\\
        0.55 & 0.55 & 0.55 & 1
    \end{pmatrix}.
\]
Initial wealth is $x_0=1$, the target terminal mean is $d=1.18$, and the
training and testing samples contain $5{,}000$ and $10{,}000$
independent paths, respectively.

For the structural comparison, we focus on the final Bellman stage, where
the continuation value is known exactly. We consider the unconstrained
fixed-temperature formulation, the no-shorting fixed-temperature
formulation, and the proposed state-dependent formulation. Their
date-$T-1$ value functions are obtained, respectively, from the
corresponding Bellman equations. The two fixed-temperature formulations
are evaluated at $\bar\tau=10^{-3}$, while the proposed formulation uses
the base temperature $\tau=10^{-3}$. 

\begin{figure}[htbp]
    \centering
    \includegraphics[width=0.75\textwidth]
    {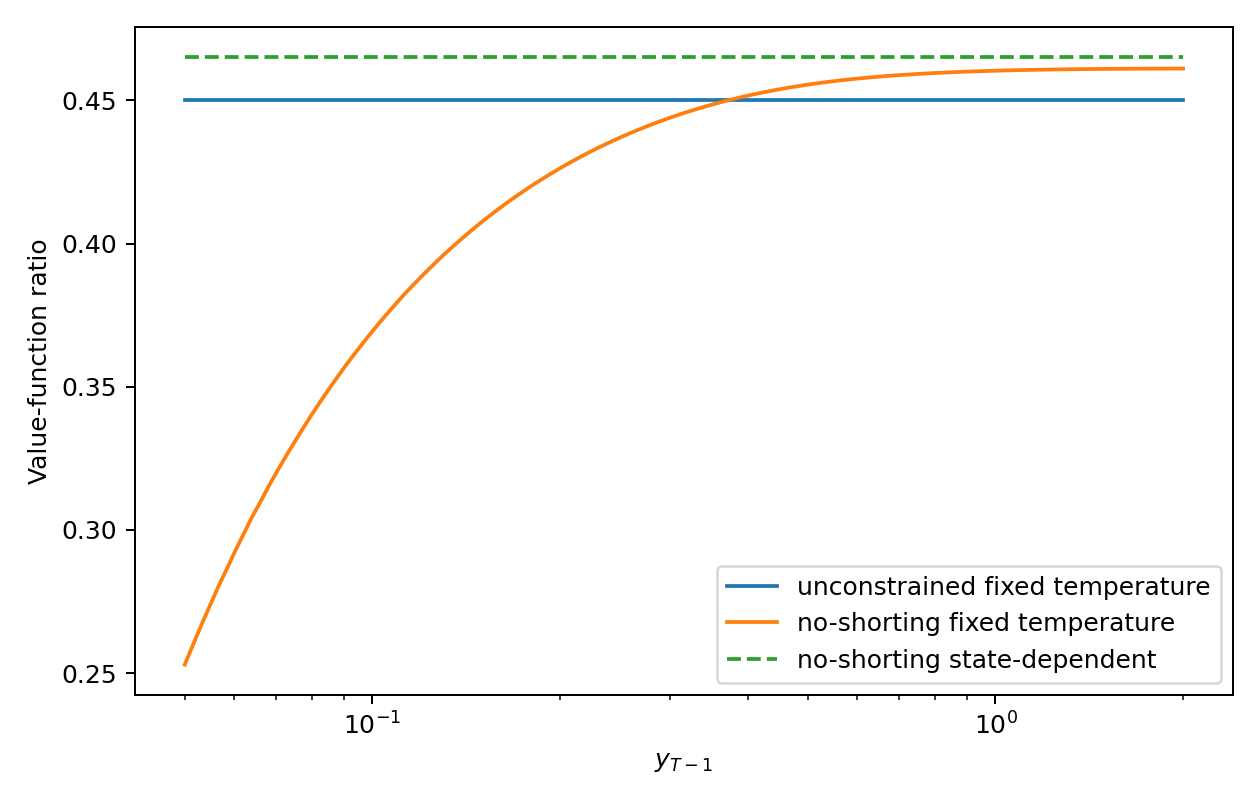}
 \caption{Bellman structure under fixed and state-dependent entropy
regularization. The figure plots, for each formulation, the date-$T-1$
value function divided by $q_{T-1}^2y_{T-1}^2$ over $y_{T-1}>0$.
The two fixed-temperature formulations use $\bar\tau=10^{-3}$, while
the proposed state-dependent formulation uses $\tau=10^{-3}$. A constant
ratio indicates a quadratic positive-state value function.}
    \label{fig:structure-preservation}
\end{figure}

Figure~\ref{fig:structure-preservation} plots, for each formulation, the
date-$T-1$ value function divided by the quadratic scale
$q_{T-1}^2y_{T-1}^2$, over $y_{T-1}>0$. If the positive-state value
function is quadratic in $y_{T-1}$, this ratio must be constant.  
The ratio is constant at approximately $0.4502$ for the unconstrained
fixed-temperature benchmark and $0.4650$ for the proposed
state-dependent formulation. In contrast, under no-shorting with a fixed
absolute temperature, it increases from approximately $0.2529$ at
$y_{T-1}=0.05$ to $0.4611$ at $y_{T-1}=2$. Hence the first and third
formulations retain a quadratic positive-state value function, whereas
the fixed-temperature no-shorting formulation does not. The deviation is
largest near the truncation boundary and diminishes as $y_{T-1}$
increases, when the boundary becomes less relevant.

We next examine whether the structure-preserving recursion remains
convex over the entire investment horizon. We consider the base-temperature grid
\begin{align*}
    \tau\in
    \{&
    10^{-4},
    2.5\times10^{-4},
    5\times10^{-4},
    7.5\times10^{-4},
    10^{-3},
    2.5\times10^{-3},\\
    &5\times10^{-3},
    7.5\times10^{-3},
    10^{-2},
    1.5\times10^{-2},
    2.25\times10^{-2}
    \}.
\end{align*}
For each value of $\tau$, we apply the sample-based backward recursion in
Section~\ref{sec:sample-implementation} and record the smallest empirical
continuation coefficient generated along the recursion. If a nonpositive
coefficient is encountered, the recursion is terminated at that stage,
and that coefficient is recorded as the minimum for the corresponding
temperature. Figure~\ref{fig:min-continuation} plots this minimum
continuation coefficient against $\tau$.
\begin{figure}[htbp]
    \centering
    \includegraphics[width=0.75\textwidth]
    {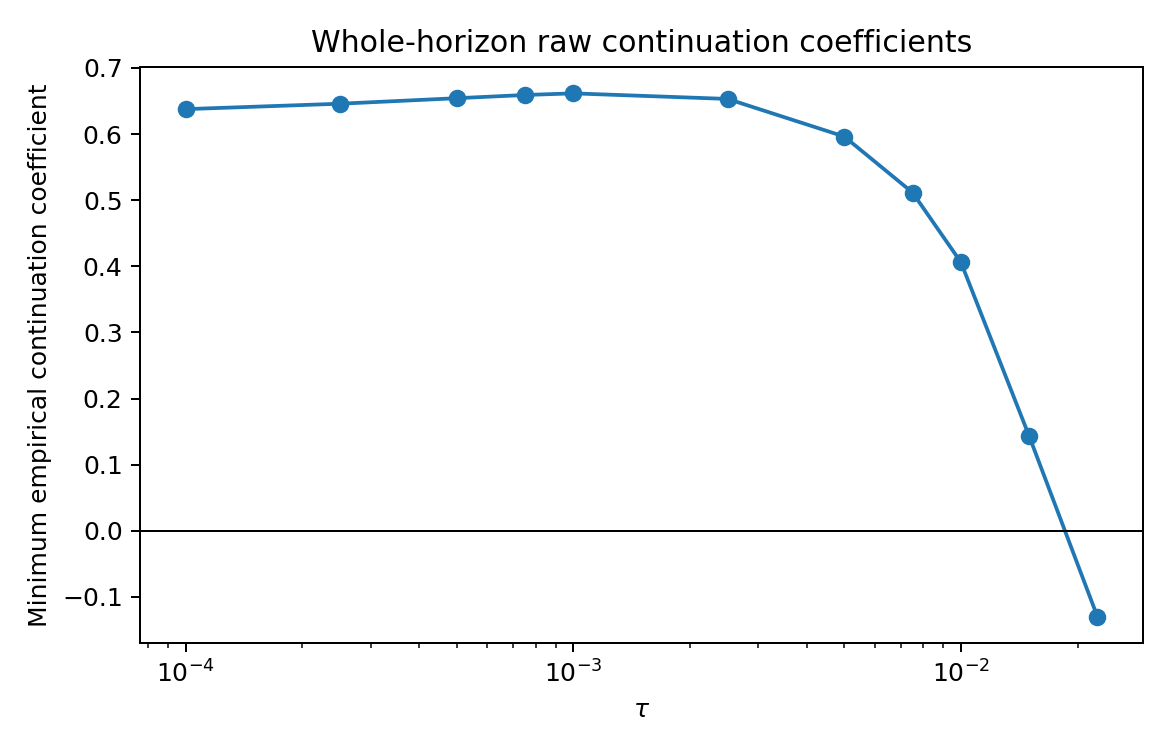}
\caption{Whole-horizon continuation-coefficient diagnostic.
For each base temperature $\tau$, the figure reports the smallest
empirical continuation coefficient generated along the backward
recursion. The recursion stops when a nonpositive coefficient first
appears, and that coefficient is recorded as the minimum. The horizontal
line marks the convexity boundary at zero.}
    \label{fig:min-continuation}
\end{figure}

The recursion remains in the positive-coefficient region through
$\tau=0.015$. At small temperatures, the minimum coefficient rises
slightly from $0.6376$ at $\tau=10^{-4}$ to $0.6615$ at
$\tau=10^{-3}$. It then declines to $0.5962$ at $\tau=0.005$,
$0.4061$ at $\tau=0.01$, and $0.1432$ at $\tau=0.015$. Thus the effect
of entropy on the continuation coefficients is not monotone even within
the admissible region, but sufficiently large regularization eventually
pushes the recursion toward the curvature boundary. At $\tau=0.0225$, the recursion produces
\[
    \widehat C_1^\tau=-0.1298,
\]
so $\widehat C_0^\tau$ is not formed under the convexity-preserving
recursion. This result gives a direct numerical counterpart to
Theorem~\ref{thm:whole-horizon-convexity-new}: the admissible set contains
a nontrivial range of strictly positive temperatures, but positivity is
not automatic for arbitrary $\tau$. Once one continuation coefficient
becomes nonpositive, the preceding stage can no longer be continued as a
convex coefficient problem. For all reported temperatures at which $\widehat C_0^\tau$ remains
positive, it is the smallest continuation coefficient. This indicates
that the initial coefficient is typically the binding one, reflecting
the cumulative effect of entropy corrections along the backward
recursion. The case $\tau=0.0225$, however, shows that the breakdown need
not occur at $t=0$: $\widehat C_1^\tau$ becomes negative before
$\widehat C_0^\tau$ can be computed. Hence positivity is a genuinely
recursive requirement, not merely a condition on the initial
continuation coefficient.

We next evaluate the economic implications of the deterministic mode
policy. Table~\ref{tab:simulation-performance-new} compares two
estimate-then-optimize benchmarks with the proposed deterministic mode
policy at three representative temperatures. ETO-GMM uses the correctly
specified two-component full-covariance Gaussian-mixture model, whereas
ETO-GD uses a deliberately misspecified single-regime diagonal Gaussian
approximation. The deterministic mode policy in (\ref{eq:empirical-mode-policy})
 uses the sample plug-in
multiplier $\widehat\lambda_\tau^*$ and the empirical mode coefficients
$\widehat K_t^{\tau,\sharp}$ obtained from Section~4.

\begin{table}[htbp]
    \centering
    \caption{Out-of-sample portfolio performance within the
    convexity-preserving temperature region}
    \label{tab:simulation-performance-new}
    \begin{tabular}{lrrrrr}
        \toprule
        Method & $\tau$
        & \shortstack{Terminal\\mean}
        & \shortstack{Terminal\\variance}
        & \shortstack{Target\\error}
        & \shortstack{Sharpe\\ratio} \\
        \midrule
        ETO-GMM & -- & 1.1758 & 0.0404 & -0.0042 & 0.7235 \\
        ETO-GD  & -- & 1.1766 & 0.0805 & -0.0034 & 0.5152 \\
        Deterministic mode & $10^{-4}$ & 1.1820 & 0.0441 & 0.0020 & 0.7220 \\
        Deterministic mode & $10^{-3}$ & 1.1926 & 0.0505 & 0.0126 & 0.7220 \\
        Deterministic mode & $10^{-2}$ & 1.1228 & 0.0164 & -0.0572 & 0.7220 \\
        \bottomrule
    \end{tabular}
\end{table}

ETO-GMM and ETO-GD deliver similar terminal means, but the misspecified
ETO-GD has substantially higher variance and a lower Sharpe-like ratio.
At $\tau=10^{-4}$, the deterministic mode policy closely tracks the
target and delivers risk-adjusted performance comparable to the correctly
specified ETO-GMM benchmark. As $\tau$ increases, the economic effects become more pronounced. At
$\tau=10^{-3}$ the policy overshoots the target, while at
$\tau=10^{-2}$ it becomes much more conservative and substantially
undershoots it, even though the temperature remains in the
convexity-preserving region. Thus, positivity of the continuation
coefficients guarantees structural validity of the recursion, but does
not by itself determine an economically desirable temperature.

Finally, we examine whether the continuation-coefficient results are
sensitive to numerical evaluation of the log-partition integrals.
Table~\ref{tab:qmc-sensitivity} reports the recursion at $\tau=10^{-3}$
using $2{,}048$, $4{,}096$, and $8{,}192$ randomized QMC points. The
baseline specification uses $N_q=4{,}096$. The estimates are essentially unchanged across QMC resolutions. Relative
to the $4{,}096$-point baseline, the maximum relative deviation of
$\widehat C_0^\tau$ is below $0.01\%$, while
$\widehat\lambda_\tau^*$ and the resulting terminal moments vary only in
the fourth or later decimal places. The continuation-coefficient pattern
in Figure~\ref{fig:min-continuation} is therefore not driven by the
chosen QMC resolution.

\begin{table}[htbp]
    \centering
    \caption{Sensitivity to the number of quasi-Monte Carlo (QMC) points}
    \label{tab:qmc-sensitivity}
    \begin{tabular}{rrrrrr}
        \toprule
        $N_q$ & $\widehat C_0^\tau$
        & $\min_t\widehat C_t^\tau$
        & $\widehat\lambda_\tau^*$
        & \shortstack{Terminal\\mean}
        & \shortstack{Terminal\\variance} \\
        \midrule
        2,048 & 0.661525 & 0.661525 & -0.292427
              & 1.192672 & 0.050530 \\
        4,096 & 0.661462 & 0.661462 & -0.292344
              & 1.192641 & 0.050511 \\
        8,192 & 0.661496 & 0.661496 & -0.292388
              & 1.192658 & 0.050521 \\
        \bottomrule
    \end{tabular}
\end{table}

Overall, the numerical evidence mirrors the two theoretical results of
the paper. No-shorting breaks the quadratic closure of the standard
fixed-temperature formulation, while the state-dependent temperature
restores the finite-dimensional Bellman structure. This restoration is
valid only over a finite positive-temperature region in which all
continuation coefficients remain positive. Within that region, the
sample-based mode policy produces economically meaningful portfolio
outcomes, but structural admissibility alone does not determine the
preferred temperature.

\section{Empirical Illustration}
\label{sec:empirical-illustration}

We examine whether the continuation-coefficient restriction identified
in Section~\ref{sec:structure-preserving-soft} remains relevant in
real-data implementation and how the deterministic mode policy performs
out of sample relative to standard estimate-then-optimize benchmarks.

The data are obtained from the Kenneth French Data Library and are
monthly. The risky assets are ten
Fama--French industry portfolios: NoDur, Durbl, Manuf, Enrgy, HiTec,
Telcm, Shops, Hlth, Utils, and Other, whose monthly return are computed by value-weighted method. We also use the monthly market
excess return, SMB, HML, and the one-month risk-free rate. The sample
runs from January 1975 to December 2025. All returns are converted from
percentages to decimals. Within each rolling window, the riskless gross
return $s_t$ used in the investment is fixed at the average
one-month risk-free gross return over the training sample, and the risky
excess-return vectors $P_t$ are constructed relative to this value.

Each portfolio problem starts from $x_0=1$, has target terminal mean
$d=1.1$, and uses a six-month horizon, $T=6$. At each evaluation date,
the preceding $180$ months are used for estimation, and the resulting
policy is applied to the following non-overlapping six-month test
episode; the window is then advanced by six months. The first test
episode begins in January 1990 and the last in July 2025, yielding $72$ non-overlapping out-of-sample episodes. For the deterministic mode policy, each
$180$-month training window is converted into $175=180-6+1$ overlapping
six-month return paths, which enter the sample-based recursion in
Section~\ref{sec:sample-implementation} directly. We compare this policy
with three estimate-then-optimize benchmarks. ETO-GMM selects, by BIC,
a Gaussian-mixture specification with one to three components and full
or diagonal covariance matrices; ETO-AR selects the autoregressive lag
order by BIC from $p\in\{0,1,2,3,4,6\}$; and ETO-FF3 models industry
returns using the Fama--French three factors. Each fitted benchmark model
generates six-month return paths that are passed through the same
deterministic no-shorting Bellman recursion. Thus all methods use the
same rolling windows, investment horizon, mean--variance objective, and
no-shorting constraint, while differing in how the return distribution
enters the portfolio problem.

Figure~\ref{fig:empirical-admissibility} reports the rolling counterpart
of the whole-horizon continuation-coefficient diagnostic over the base-temperature grid
\[
    \tau\in
    \left\{
    10^{-4},
    2.5\times10^{-4},
    5\times10^{-4},
    7.5\times10^{-4},
    10^{-3},
    2.5\times10^{-3},
    5\times10^{-3},
    7.5\times10^{-3},
    10^{-2}
    \right\}.
\]
For each $\tau$ and each of the $72$ rolling training windows, we run the
backward recursion and record the smallest empirical continuation
coefficient generated within that window. Panel~(a) reports the fraction
of windows in which all continuation coefficients remain positive.
Panel~(b) summarizes the distribution of the resulting $72$ window-level
minimum coefficients by reporting, for each $\tau$, their $5\%$ quantile
and their overall minimum.

\begin{figure}[htbp]
    \centering
    \includegraphics[width=0.96\textwidth]
    {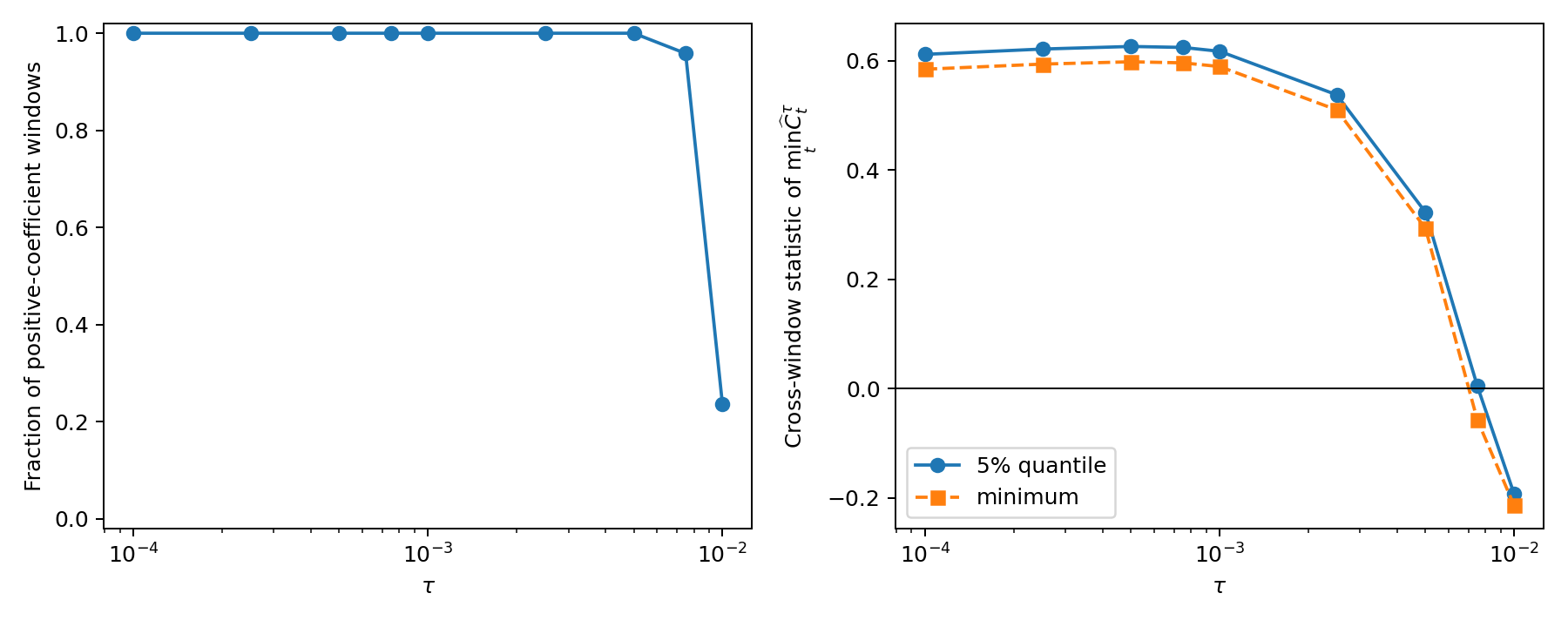}
\caption{Rolling continuation-coefficient diagnostic.
For each rolling window, the backward recursion stops when a nonpositive
continuation coefficient first appears. Panel~(a) reports the fraction of
the $72$ windows with all positive coefficients, and Panel~(b) reports
the $5\%$ quantile and minimum of the within-window minima.}
    \label{fig:empirical-admissibility}
\end{figure}

The empirical pattern closely mirrors the simulation evidence in
Section~\ref{sec:numerical-analysis}. All $72$ rolling windows remain in
the positive-coefficient region through $\tau=5\times10^{-3}$. At
$\tau=7.5\times10^{-3}$, $69$ windows, or $95.8\%$, remain positive, but
the minimum coefficient across all windows has already fallen to about
$-0.0574$, while the $5\%$ quantile is only $0.0049$. At
$\tau=10^{-2}$, only $17$ of the $72$ windows remain in the
positive-coefficient region and the overall minimum falls to about
$-0.2137$. Thus the convexity boundary identified theoretically is also
economically relevant in rolling real-data applications: sufficiently
large entropy regularization causes the backward recursion to fail in a
substantial fraction of samples.

We next compare portfolio outcomes at the temperatures for which all
$72$ rolling windows remain in the positive-coefficient region.
Table~\ref{tab:empirical-performance-d110} reports the out-of-sample
performance of the deterministic mode policy and the three ETO
benchmarks.

\begin{table}[htbp]
\centering
\small
\caption{Out-of-sample performance in rolling Fama--French industry portfolios ($d=1.10$)}
\label{tab:empirical-performance-d110}
\begin{tabular}{lrrrrrrr}
\toprule
Method & $\tau$ & \shortstack{Terminal\\mean} & \shortstack{Terminal\\variance} & \shortstack{Target\\error} & $q_{0.05}$ & \shortstack{Wealth\\CVaR$_{0.05}$} & \shortstack{Sharpe\\ratio} \\
\midrule
Deterministic mode & 0.00010 & 1.0744 & 0.0494 & -0.0256 & 0.7466 & 0.3568 & 0.2458 \\
Deterministic mode & 0.00025 & 1.0798 & 0.0555 & -0.0202 & 0.7292 & 0.3289 & 0.2544 \\
Deterministic mode  & 0.00050 & 1.0856 & 0.0618 & -0.0144 & 0.7122 & 0.3074 & 0.2646 \\
Deterministic mode  & 0.00075 & 1.0895 & 0.0648 & -0.0105 & 0.7054 & 0.3055 & 0.2737 \\
Deterministic mode  & 0.00100 & 1.0909 & 0.0648 & -0.0091 & 0.7055 & 0.3177 & 0.2791 \\
Deterministic mode  & 0.00250 & 1.0766 & 0.0407 & -0.0234 & 0.7704 & 0.4608 & 0.2814 \\
Deterministic mode  & 0.00500 & 1.0498 & 0.0135 & -0.0502 & 0.8740 & 0.6737 & 0.2585 \\
ETO-GMM & -- & 1.0692 & 0.0454 & -0.0308 & 0.8257 & 0.4281 & 0.2320 \\
ETO-AR & -- & 1.0701 & 0.0415 & -0.0299 & 0.8092 & 0.4629 & 0.2467 \\
ETO-FF3 & -- & 1.0691 & 0.0393 & -0.0309 & 0.8261 & 0.4782 & 0.2488 \\
\bottomrule
\end{tabular}
\end{table}

The ETO benchmarks deliver very similar terminal means and
risk profiles. In comparison, the deterministic mode policy exhibits a
clear temperature-dependent trade-off. As $\tau$ increases from
$10^{-4}$ to $10^{-3}$, the terminal mean moves closer to the target
$d=1.10$ and the Sharpe ratio rises from $0.2458$ to $0.2791$, although
terminal variance also increases. At $\tau=10^{-3}$, the target error is
reduced to $-0.0091$, substantially smaller in magnitude than those of
the three ETO benchmarks, which are around $-0.03$. For larger temperatures, the policy becomes increasingly conservative.
At $\tau=2.5\times10^{-3}$, variance falls to $0.0407$ and downside
wealth improves, while the target error widens to $-0.0234$. At
$\tau=5\times10^{-3}$, variance and downside risk improve markedly, but
the terminal mean falls to $1.0498$, producing a target error of
$-0.0502$. Thus, even within the positive-coefficient region, the
temperature materially changes the balance between target tracking and
risk reduction; recursive convexity alone does not determine the
economically preferred temperature.

The empirical evidence therefore reinforces the main theoretical
distinction. The positivity restriction is operationally relevant in
real data, since larger temperatures eventually move many rolling
problems outside the convex regime. Within the positive-coefficient
region, the deterministic mode policy remains implementable and produces
economically reasonable outcomes, but recursive convexity alone does not
select the preferred temperature.

\section{Conclusion}
\label{sec:conclusion}

This paper studies entropy regularization in the discrete-time
mean--variance problem under no-shorting. A standard fixed absolute
temperature can preserve convexity but generally destroys the homogeneous
Bellman structure that makes the constrained problem tractable. We show
that matching the entropy scale to the quadratic wealth-gap structure
restores the piecewise-quadratic value function and the finite-dimensional
backward recursion. The resulting continuation coefficients, however,
must remain positive to preserve recursive convexity. We prove that a
common sufficiently small base temperature guarantees 
convexity of the entire finite-horizon recursion, and we establish
zero-temperature recovery of the original no-shorting solution. The
sample-based implementation, numerical analysis, and rolling
Fama--French industry-portfolio application confirm both implications:
the state-dependent temperature preserves Bellman structure, while the
positivity restriction defines a genuine finite temperature region.
Within that region, the deterministic mode policy remains implementable,
although convexity alone does not determine the economically preferred
temperature.

Several extensions are natural. One is to develop data-driven temperature
selection subject to the recursive positivity constraint. Another is to
improve the scalability of the log-partition calculation for portfolios
with many risky assets. More broadly, the structure--convexity trade-off
identified here may arise in other constrained stochastic control
problems whose tractability relies on homogeneity or another
finite-dimensional Bellman structure.

\bibliographystyle{plainnat}
\bibliography{references}

\end{document}